\author{Gregory J. Galloway}
\address{Department of Mathematics, University of Miami, Coral Gables, FL, USA.}
\email{galloway@math.miami.edu}
\author{Abraão Mendes}
\address{Instituto de Matemática, Universidade Federal de Alagoas, Maceió, AL, Brazil.}
\email{abraao.mendes@im.ufal.br}
\title[Some rigidity results for compact initial data sets]{Some rigidity results for compact \\ initial data sets}
\newtheorem{thm}{Theorem}[section]
\newtheorem{prop}[thm]{Proposition}
\newtheorem{cor}[thm]{Corollary}
\theoremstyle{remark}
\newtheorem{rmk}[thm]{Remark}
\DeclareMathOperator{\Ric}{Ric}
\DeclareMathOperator{\inn}{in}
\DeclareMathOperator{\out}{out}
\DeclareMathOperator{\id}{id}
\DeclareMathOperator{\Real}{Re}
\DeclareMathOperator{\tr}{tr}
\DeclareMathOperator{\divergence}{div}
\newcommand{\A}{\mathcal{A}}
\newcommand{\B}{\mathcal{B}}
\newcommand{\R}{\mathbb{R}}
\newcommand{\V}{\mathcal{V}}
\newcommand{\p}{\partial}
\renewcommand{\Re}{\Real}
\renewcommand{\sl}[1]{\textsl{#1}}
\renewcommand{\div}{\divergence}
\newcommand{\bbR}{\mathbb{R}}
\begin{document}

\raggedbottom

\numberwithin{equation}{section}

\setstretch{1.1}

\begin{abstract}
In this paper, we prove several rigidity results for compact initial data sets, in both the boundary and no boundary cases. In particular, under natural energy, boundary, and topological conditions, we obtain a global version of the main result in \cite{GalMen}. We also obtain some extensions of results in \cite{EicGalMen}. A number of examples are given in order to illustrate some of the results presented in this paper.
\end{abstract}

\maketitle

\section{Introduction}

The theory of marginally outer trapped surfaces has played an important role in several areas of mathematical general relativity, for example, in proofs of the spacetime positive mass theorem (\sl{e.g.} \cite{EHLS,LeeLesUng}) and in results on the topology of black holes (\sl{e.g.} \cite{GalMots2,GalSch}). In \cite{GalMots2}, a local MOTS rigidity result was obtained, which implies that an outermost MOTS (\sl{e.g.} the surface of a black hole) in an initial data set satisfying the dominant energy condition ($\mu\ge|J|$) is positive Yamabe, \sl{i.e.} admits a metric of positive scalar curvature. This in turn leads to well-known restrictions on the topology of $3$-dimensional outermost MOTS. Such results extend to the spacetime setting well-known results concerning Riemannian manifolds of nonnegative scalar curvature.

In \cite{EicGalMen}, the authors, together with M.~Eichmair, obtained, among other results, a global version of the local MOTS rigidity result in \cite{GalMots2}, which, in particular, does not require a weakly outermost condition; see \cite[Theorem~1.2]{EicGalMen}. This result was motivated in part by J.~Lohkamp's approach to the spacetime positive mass theorem in \cite{Lohkamp2016}. It implies, in dimensions $3\le n\le 7$, Lohkamp's result on the nonexistence of `$\mu-|J|>0$ islands', \cite[Theorem~2]{Lohkamp2016}. Theorem~1.2 in \cite{EicGalMen} has also been applied to obtain a positive mass theorem for asymptotically hyperbolic manifolds with boundary; see \cite{ChruGal}. This theorem will be a useful tool in the present work as well.

In this paper, we present some further initial data rigidity results for compact initial data sets, in both the boundary and no boundary cases. In \cite{GalMen}, the authors considered $3$-dimensional initial data sets containing spherical MOTS. It was shown, roughly speaking, that in a matter-filled spacetime, perhaps with positive cosmological constant, a stable marginally outer trapped $2$-sphere must satisfy a certain area inequality; namely, its area must be bounded above by $4\pi/c$, where $c > 0$ is a lower bound on a natural energy-momentum term. We then established rigidity results for stable, or weakly outermost, marginally outer trapped $2$-spheres when this bound is achieved. In particular, we prove a local splitting result, \cite[Theorem~3.2]{GalMen}, that extends to the spacetime setting a result of H.~Bray, S.~Brendle, and A.~Neves~\cite{BraBreNev} concerning area minimizing $2$-spheres in Riemannian $3$-manifolds with positive scalar curvature. These spacetime results have interesting connections to the Vaidya and Nariai spacetimes \cite{GalMen}.

One of the main aims of the present work is to obtain a global version of \cite[Theorem~3.2]{GalMen}; see Theorem~\ref{thm:main.boundary} in Section~\ref{sec:boundary.cases} for a statement. The proof makes use of certain techniques introduced in \cite{EicGalMen}. In this work, we have also been led to consider certain variations of \cite[Theorem~5.2]{EicGalMen}; see Theorems~\ref{thm:Brane.1}~and~\ref{thm:Brane.2} in Section~\ref{sec:boundary.cases}. Here, it becomes useful to consider the so-called `brane action', as well as the area functional. These results are then used to examine the question of the existence of MOTS in closed (compact without boundary) initial data sets in Section~\ref{sec:closed.cases}. The relationship to known spacetimes is also discussed.

The paper is organized as follows: in Section~\ref{sec:preliminaries}, we review some background material on MOTS; in Section~\ref{sec:boundary.cases}, we state and prove several global rigidity results for compact-with-boundary initial data sets; and, in Section~\ref{sec:closed.cases}, we apply the results obtained in Section~\ref{sec:boundary.cases} to prove some global rigidity statements for closed initial data sets. In Section~\ref{sec:closed.cases}, we also give various examples in order to illustrate the results presented in this paper.

\medskip
\paragraph{\bf{Acknowledgements.}} The work of GJG was partially supported by the Simons Foundation, under Award No. 850541. The work of AM was partially supported by the Conselho Nacional de Desenvolvimento Científico e Tecnológico - CNPq, Brazil (Grant 305710/2020-6), the Coordenação de Aperfeiçoamento de Pessoal de Nível Superior - CAPES, Brazil (CAPES-COFECUB 88887.143161/2017-0), and the Fundação de Amparo à Pesquisa do Estado de Alagoas - FAPEAL, Brazil (Process E:60030.0000002254/2022). The authors would like to thank Ken Baker and Da Rong Cheng for helpful comments. They would also like to thank Christina Sormani for a comment that motivated the results in Section~\ref{sec:closed.cases}.

\section{Preliminaries}\label{sec:preliminaries}

All manifolds in this paper are assumed to be connected and orientable except otherwise stated.

An \sl{initial data set} $(M,g,K)$ consists of a Riemannian manifold $(M,g)$ with boundary $\p M$ (possibly $\p M=\varnothing$) and a symmetric $(0,2)$-tensor $K$ on $M$. 

Let $(M,g,K)$ be an initial data set.

The \sl{local energy density} $\mu$ and the \sl{local current density} $J$ of $(M,g,K)$ are given by
\begin{align*}
\mu=\frac{1}{2}(S-|K|^2+(\tr K)^2)\quad\mbox{and}\quad J=\div(K-(\tr K)g),
\end{align*}
where $S$ is the scalar curvature of $(M,g)$. We say that $(M,g,K)$ satisfies the \sl{dominant energy condition} (DEC for short) if
\begin{align*}
\mu\ge|J|\quad\mbox{on}\quad M.
\end{align*}

Consider a closed embedded hypersurface $\Sigma\subset M$. Since, by assumption, $\Sigma$ and $M$ are orientable, we can choose a unit normal field $\nu$ on $\Sigma$. If $\Sigma$ separates $M$, by convention, we say that $\nu$ points to the outside of $\Sigma$.

The \sl{null second fundamental forms} $\chi^+,\chi^-$ of $\Sigma$ in $(M,g,K)$ with respect to $\nu$ are given by
\begin{align*}
\chi^+=K|_\Sigma+A\quad\mbox{and}\quad\chi^-=K|_\Sigma-A,
\end{align*}
where $A$ is the second fundamental form of $\Sigma$ in $(M,g)$ with respect to $\nu$. More precisely,
\begin{align*}
A(X,Y)=g(\nabla_X\nu,Y)\quad\mbox{for}\quad X,Y\in\mathfrak{X}(\Sigma),
\end{align*}
where $\nabla$ is the Levi-Civita connection of $(M,g)$.

The \sl{null expansion scalars} $\theta^+,\theta^-$ of $\Sigma$ in $(M,g,K)$ with respect to $\nu$ are given by
\begin{align}\label{eq:MOTS.def}
\theta^+=\tr_\Sigma(K)+H\quad\mbox{and}\quad\theta^-=\tr_\Sigma(K)-H,
\end{align}
where $H=\tr A$ is the mean curvature of $\Sigma$ in $(M,g)$ with respect to $\nu$. Observe that $\theta^\pm=\tr\chi^\pm$.

R.~Penrose introduced the now famous notion of a \sl{trapped surface}, when both $\theta^+$ and $\theta^-$ are negative. Restricting to one side, we say that $\Sigma$ is \sl{outer trapped} if $\theta^+<0$, \sl{weakly outer trapped} if $\theta^+\le0$, and \sl{marginally outer trapped} if $\theta^+=0$. In the latter case, we refer to $\Sigma$ as a \sl{marginally outer trapped surface} (MOTS for short). 

Assume now that $\Sigma$ is a MOTS in $(M,g,K)$, with respect to a unit normal $\nu$, that is a boundary in $M$. More precisely, assume that $\nu$ points towards a top-dimensional submanifold $M^+\subset M$ such that $\p M^+=\Sigma\sqcup S$, where $S$ (possibly $S=\varnothing$) is a union of components of $\p M$ (in particular, if $\Sigma$ separates $M$). We think of $M^+$ as the region outside of $\Sigma$. Then we say that $\Sigma$ is \sl{outermost} (resp. \sl{weakly outermost}) if there is no closed embedded hypersurface in $M^+$ with $\theta^+\le0$ (resp. $\theta^+<0$) that is homologous to and different from $\Sigma$. The notions of \sl{locally weakly outermost} and \sl{locally outermost} MOTS can be given in an analogous way.

\begin{rmk}
It is important to mention that initial data sets arise naturally in general relativity. In fact, let $M$ be a spacelike hypersurface in a \sl{spacetime}, \sl{i.e.} a time-oriented Lorentzian manifold, $(\bar N,\bar h)$. Let $g$ be the Riemannian metric on $M$ induced from $\bar h$ and $K$ be the second fundamental form of $M$ in $(\bar N,\bar h)$ with respect to the future-pointing timelike unit normal $u$ on $M$. Then $(M,g,K)$ is an initial data set. As before, let $\Sigma$ be a closed embedded hypersurface in $M$. In this setting, $\chi^+$ and $\chi^-$ are the \sl{null second fundamental forms} of $\Sigma$ in $(\bar N,\bar h)$ with respect to the \sl{null normal fields} 
\begin{align*}
\ell^+=u|_\Sigma+\nu\quad\mbox{and}\quad\ell^-=u|_\Sigma-\nu,
\end{align*}
respectively. Observe that $\theta^\pm=\div_\Sigma\ell^\pm$. Physically, $\theta^+$ (resp. $\theta^-$) measures the divergence of the outward pointing (resp. inward pointing) light rays emanating from $\Sigma$.
\end{rmk}

An initial data set $(M,g,K)$ is said to be \sl{time-symmetric} or \sl{Riemannian} if $K=0$. In this case, a MOTS in $(M,g,K)$ is nothing but a minimal hypersurface in $(M,g)$. Moreover, the energy condition $\mu-|J|\ge c$, for some constant $c$, reduces to the requirement on the scalar curvature $S\ge2c$. Quite generally, marginally outer trapped surfaces share many properties with minimal hypersurfaces, which they generalize; see \sl{e.g.} the survey article \cite{AndEicMet}. 

As in the minimal hypersurfaces case, an important notion for the theory of MOTS is the notion of \sl{stability} introduced, in the context of MOTS, by L.~Andersson, M.~Mars, and W.~Simon~\cite{AndMarSim05,AndMarSim08}, which we now recall.

Let $\Sigma$ be a MOTS in $(M,g,K)$ with respect to $\nu$. Consider a normal variation of $\Sigma$ in $M$, \sl{i.e.} a variation $t\rightarrow\Sigma_t$ of $\Sigma=\Sigma_0$ with variation vector field $\frac{\p}{\p t}|_{t=0}=\phi\,\nu$, $\phi\in C^\infty(\Sigma)$. Let $\theta^\pm(t)$ denote the null expansion scalars of $\Sigma_t$ with respect to $\nu_t$, $\nu=\nu_t|_{t=0}$. Computations as in \cite[p.~2]{AndMarSim05} or \cite[p.~861]{AndMarSim08} give,
\begin{align}\label{eq:first.theta}
\frac{\p\theta^+}{\p t}\Big|_{t=0}=L\phi,
\end{align} 
where
\begin{align*}
L\phi=-\Delta\phi+2\langle X,\nabla\phi\rangle+(Q+\div X-|X|^2)\phi
\end{align*}
and
\begin{align*}
Q=\frac{1}{2}S_\Sigma-(\mu+J(\nu))-\frac{1}{2}|\chi^+|^2.
\end{align*}
Here, $\Delta$ is the negative semi-definite Laplace-Beltrami operator, $\nabla$ the gradient, $\div$ the divergence, and $S_\Sigma$ the scalar curvature of $\Sigma$ with respect to the induced metric $\langle\,\cdot\,,\,\cdot\,\rangle$. Moreover, $X$ is the tangent vector field on $\Sigma$ that is dual to the 1-form $K(\nu,\,\cdot\,)|_\Sigma$.

It is possible to prove (see \cite[Lemma~4.1]{AndMarSim08}) that $L$ has a real eigenvalue $\lambda_1=\lambda_1(L)$, called the \sl{principal eigenvalue} of $L$, such that $\Re\lambda\ge\lambda_1$ for any other complex eigenvalue $\lambda$. Furthermore, the corresponding eigenfunction $\phi_1$, $L\phi_1=\lambda_1\phi_1$, is unique up to a multiplicative constant and can be chosen to be real and everywhere positive.

Then a MOTS $\Sigma$ is said to be \sl{stable} if $\lambda_1(L)\ge0$. This is equivalent to the existence of a positive function $\phi\in C^\infty(\Sigma)$ such that $L\phi\ge0$. It follows directly from \eqref{eq:first.theta} with $\phi=\phi_1$ that every locally weakly outermost (in particular, locally outermost) MOTS is stable.

Observe that in the Riemannian case, $L$ reduces to the classical stability operator, also known as the Jacobi operator, for minimal hypersurfaces. As such, in the literature, $L$ is known as the \sl{MOTS stability operator} or the \sl{stability operator for MOTS}.

The study of rigidity results for minimal surfaces in Riemannian manifolds with a lower scalar curvature bound has been, and continues to be, an active area of research. From the point of view of initial data sets, these are time-symmetric results, as noted above. It has been of interest to extend some of these results to general initial data sets. In the context of general relativity, black hole horizons within initial data sets are often modeled by MOTS, and, in particular, minimal surfaces in the time-symmetric case. These rigidity results often shed light on properties of spacetimes with black holes, as noted in the introduction.

The next proposition and theorem extend to the general non-time-symmetric setting some results of Bray, Brendle, and Neves \cite{BraBreNev}.

\begin{prop}[Infinitesimal rigidity, \cite{GalMen}]\label{prop:infinitesimal.rigidity}
Let $\Sigma$ be a stable MOTS in a $3$-dimensional initial data set $(M,g,K)$ with respect to a unit normal field $\nu$. Suppose there exists a constant $c>0$ such that $\mu+J(\nu)\ge c$ on $\Sigma$. Then the area of $\Sigma$ satisfies,
\begin{align*}
\A(\Sigma)\le\frac{4\pi}{c}.
\end{align*}
Moreover, if $\A(\Sigma)=4\pi/c$, then the following hold:
\begin{enumerate}
\item[(a)] $\Sigma$ is a round $2$-sphere with Gaussian curvature $\kappa_\Sigma=c$,
\item[(b)] the second fundamental form $\chi^+$ of $\Sigma$ with respect to $\nu$ vanishes, and
\item[(c)] $\mu+J(\nu)=c$ on $\Sigma$. 
\end{enumerate}
\end{prop}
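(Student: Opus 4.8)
The plan is to follow the by-now standard route for MOTS rigidity of this Bray--Brendle--Neves type, adapting the Galloway--Schoen symmetrization of the (non-self-adjoint) MOTS stability operator $L$. First I would invoke stability in its sharpest form: by the cited result of Andersson--Mars--Simon there is a smooth function $\phi=\phi_1>0$ with $L\phi=\lambda_1\phi$ and $\lambda_1=\lambda_1(L)\ge 0$. Setting $f=\ln\phi$, so that $\Delta\phi/\phi=\Delta f+|\nabla f|^2$, dividing the eigenvalue equation by $\phi$ and completing the square in the terms $|X|^2-2\langle X,\nabla f\rangle$ yields the pointwise identity
\begin{align*}
Q=\lambda_1+\Delta f+|\nabla f-X|^2-\div X .
\end{align*}

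Next I would substitute $Q=\frac{1}{2}S_\Sigma-(\mu+J(\nu))-\frac{1}{2}|\chi^+|^2$ and integrate over the closed surface $\Sigma$. The terms $\int_\Sigma\Delta f$ and $\int_\Sigma\div X$ vanish, and, since $\dim M=3$ forces $S_\Sigma=2\kappa_\Sigma$, Gauss--Bonnet gives $\int_\Sigma\frac{1}{2}S_\Sigma=2\pi\chi(\Sigma)$. Hence
\begin{align*}
2\pi\chi(\Sigma)=\int_\Sigma(\mu+J(\nu))+\frac{1}{2}\int_\Sigma|\chi^+|^2+\lambda_1\A(\Sigma)+\int_\Sigma|\nabla f-X|^2 .
\end{align*}
Because $\mu+J(\nu)\ge c>0$, $\lambda_1\ge 0$, and the last two integrands are nonnegative, the right-hand side is at least $c\,\A(\Sigma)>0$; therefore $\chi(\Sigma)>0$, which for a connected orientable closed surface forces $\Sigma\cong S^2$ and $2\pi\chi(\Sigma)=4\pi$, giving $\A(\Sigma)\le 4\pi/c$.

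For the rigidity statement I would assume $\A(\Sigma)=4\pi/c$, so that every inequality above is saturated: $\lambda_1=0$ and $\nabla f=X$ (from the vanishing of $\lambda_1\A(\Sigma)+\int_\Sigma|\nabla f-X|^2$), $\chi^+\equiv 0$ on $\Sigma$, which is (b), and $\mu+J(\nu)\equiv c$ since it has integral $c\,\A(\Sigma)$ and pointwise lower bound $c$, which is (c). Feeding $\nabla f=X$ and $\lambda_1=0$ back into the pointwise identity gives $\Delta f-\div X=\Delta f-\Delta f=0$, hence $\frac{1}{2}S_\Sigma=\mu+J(\nu)=c$, i.e.\ $\kappa_\Sigma\equiv c$; a closed surface of constant positive Gaussian curvature is a round sphere, giving (a).

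I expect the only genuinely delicate point to be the symmetrization step---extracting a manifestly sign-definite identity from the drift term $2\langle X,\nabla\phi\rangle$ together with the indefinite zeroth-order coefficient $Q+\div X-|X|^2$---after which everything reduces to Gauss--Bonnet and pointwise sign comparisons. Along the way one should record the (immediate) smoothness of $f=\ln\phi_1$, coming from $\phi_1>0$ smooth, and of the field $X$, so that the integrations by parts are justified.
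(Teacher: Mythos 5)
Your argument is correct, and it is essentially the proof from the cited source \cite{GalMen} (the paper itself only quotes the proposition): stability via the positive principal eigenfunction, the substitution $f=\ln\phi$ with completion of the square in $\nabla f - X$, integration plus Gauss--Bonnet to get $2\pi\chi(\Sigma)\ge c\,\mathcal{A}(\Sigma)$, and then tracing equality back through the identity to get $\chi^+=0$, $\mu+J(\nu)=c$, and $\kappa_\Sigma=c$. No gaps noted.
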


The proposition above is used in the proof of the following local splitting theorem. But, before stating the next result, which is also used in the proof of Theorem~\ref{thm:main.boundary}, let us remember the notion of an \sl{area minimizing surface}.

With respect to a fixed Riemannian metric $g$ on a $3$-dimensional manifold $M$, a closed embedded surface $\Sigma\subset M$ is said to be \sl{area minimizing} if $\Sigma$ is of least area in its homology class in $M$, that is, $\A(\Sigma)\le\A(\Sigma')$ for any closed embedded surface $\Sigma'$ that is homologous to $\Sigma$ in $M$. In this case, we also say that $\Sigma$ \sl{minimizes area}. Similarly, $\Sigma$ is said to be \sl{locally area minimizing} if $\A(\Sigma)\le\A(\Sigma')$ for any such $\Sigma'$ in a neighborhood of $\Sigma$ in $M$.

\begin{thm}[Local splitting, \cite{GalMen}]\label{thm:local.splitting}
Let $(M,g,K)$ be a $3$-dimensional initial data set with boundary. Suppose that $(M,g,K)$ satisfies the energy condition $\mu-|J|\ge c$ for some constant $c>0$. Let $\Sigma_0$ be a closed connected component of $\p M$ such that the following conditions hold:
\begin{enumerate}
\item $\Sigma_0$ is a MOTS with respect to the normal that points into $M$ and
\item $\Sigma_0$ is locally weakly outermost and locally area minimizing.
\end{enumerate}
Then $\Sigma_0$ is topologically $S^2$ and its area satisfies,
\begin{align*}
\A(\Sigma_0)\le\frac{4\pi}{c}.
\end{align*}
Furthermore, if $\A(\Sigma_0)=4\pi/c$, then a collar neighborhood $U$ of $\Sigma$ in $M$ is such that:
\begin{enumerate}
\item[(a)] $(U,g)$ is isometric to $([0,\delta)\times\Sigma_0,dt^2+g_0)$ for some $\delta>0$, where $g_0$ - the induced metric on $\Sigma_0$ - has constant Gaussian curvature $\kappa_{\Sigma_0}=c$,
\item[(b)] $K=a\,dt^2$ on $U$, where $a\in C^\infty(U)$ depends only on $t\in[0,\delta)$, and
\item[(c)] $\mu=c$ and $J=0$ on $U$.
\end{enumerate}
\end{thm}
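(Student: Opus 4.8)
The plan is to treat the two assertions separately. The claim that $\Sigma_0\cong S^2$ with $\A(\Sigma_0)\le 4\pi/c$ should follow quickly from Proposition~\ref{prop:infinitesimal.rigidity}, while the equality case requires propagating the infinitesimal rigidity at $\Sigma_0$ through a collar by means of a local foliation, in the spirit of Bray--Brendle--Neves~\cite{BraBreNev}. For the first part I would note that, being locally weakly outermost, $\Sigma_0$ is stable, so $\lambda_1(L)\ge 0$, and that $\mu+J(\nu)\ge\mu-|J|\ge c>0$ on $\Sigma_0$; Proposition~\ref{prop:infinitesimal.rigidity} then gives $\A(\Sigma_0)\le 4\pi/c$. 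The Gauss--Bonnet computation behind that bound — dividing $L\phi_1=\lambda_1(L)\phi_1\ge 0$ by the positive eigenfunction $\phi_1$, substituting $u=\ln\phi_1$, and integrating over the closed surface $\Sigma_0$ — also yields $2\pi\chi(\Sigma_0)\ge c\,\A(\Sigma_0)+\tfrac12\int_{\Sigma_0}|\chi^+|^2>0$, hence $\chi(\Sigma_0)=2$ and $\Sigma_0\cong S^2$.

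Suppose now $\A(\Sigma_0)=4\pi/c$. By Proposition~\ref{prop:infinitesimal.rigidity}, $\Sigma_0$ is round with $\kappa_{\Sigma_0}=c$, $\chi^+\equiv 0$, and $\mu+J(\nu)=c$ on $\Sigma_0$, so $Q=\tfrac12 S_{\Sigma_0}-(\mu+J(\nu))-\tfrac12|\chi^+|^2\equiv 0$; dividing $L\phi_0=\lambda_1(L)\phi_0$ by $\phi_0$ and integrating then gives $\lambda_1(L)\,\A(\Sigma_0)=-\int_{\Sigma_0}|\nabla\ln\phi_0-X|^2\le 0$, so $\lambda_1(L)=0$ (and $X=\nabla\ln\phi_0$). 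With $\lambda_1(L)=0$ available, I would run the standard implicit function theorem construction (cf.\ \cite{GalMots2,EicGalMen}) to produce, for some $\delta>0$, a foliation $\{\Sigma_t\}_{t\in[0,\delta)}$ of a one-sided collar $U$ of $\Sigma_0$ in $M^+$ by surfaces $\Sigma_t=\{\exp_x(w(x,t)\,\nu):x\in\Sigma_0\}$, $w(\cdot,0)=0$, $\p_t w(\cdot,0)=\phi_0>0$, on each of which the outer null expansion $\theta^+(\Sigma_t)=:\theta(t)$ is constant. Two facts are then immediate: since $\Sigma_0$ is locally weakly outermost, $\theta(t)\ge 0$ for all $t$ (a leaf with $\theta(t)<0$ would be a hypersurface in $M^+$ with $\theta^+<0$, homologous to and distinct from $\Sigma_0$), and since $\Sigma_0$ is locally area minimizing, $\A(\Sigma_t)\ge\A(\Sigma_0)=4\pi/c$ for $t$ small; moreover each $\Sigma_t$ is diffeomorphic to $S^2$.

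The heart of the argument is to show $\theta\equiv 0$. Differentiating along the foliation gives $\tfrac{d}{dt}\theta^+(\Sigma_t)=L_t\varphi_t$, where $\varphi_t>0$ is the lapse; as $\theta^+(\Sigma_t)=\theta(t)$ is constant this reads $L_t\varphi_t=\theta'(t)$, and pairing with the positive principal eigenfunction $\psi_t$ of the (formal) adjoint $L_t^*$, which has the same principal eigenvalue, gives $\theta'(t)=\lambda_1(L_t)\,\frac{\int_{\Sigma_t}\varphi_t\psi_t}{\int_{\Sigma_t}\psi_t}$, so $\theta'(t)$ has the sign of $\lambda_1(L_t)$. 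On the other hand, running the same Gauss--Bonnet/eigenvalue computation on a leaf — using $\mu+J(\nu_t)\ge\mu-|J|\ge c$ and $\int_{\Sigma_t}\tfrac12 S_{\Sigma_t}=4\pi$ — gives $\lambda_1(L_t)\,\A(\Sigma_t)\le 4\pi-c\,\A(\Sigma_t)$, which together with $\A(\Sigma_t)\ge 4\pi/c$ forces $\lambda_1(L_t)\le 0$. Hence $\theta'\le 0$, and with $\theta(0)=0$, $\theta\ge 0$ we conclude $\theta\equiv 0$ and $\lambda_1(L_t)\equiv 0$. Thus every $\Sigma_t$ is a stable MOTS with $\mu+J(\nu_t)\ge c$, so Proposition~\ref{prop:infinitesimal.rigidity} applies and, with $\A(\Sigma_t)\ge 4\pi/c$, forces $\A(\Sigma_t)=4\pi/c$ and full infinitesimal rigidity on every leaf: $\Sigma_t$ is round with $\kappa_{\Sigma_t}=c$, $\chi^+(\Sigma_t)\equiv 0$, $\mu+J(\nu_t)=c$. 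In particular each interior leaf is also locally area minimizing, hence a stable minimal surface, so $H_{\Sigma_t}\equiv 0$ and $\tr_{\Sigma_t}(K)=\theta(t)-H_{\Sigma_t}\equiv 0$; testing the minimal-surface stability inequality against the constant $1$ and inserting the Gauss equation together with $S_{\Sigma_t}=2c$, $K|_{\Sigma_t}=-A_t$ (from $\chi^+(\Sigma_t)=0$) and $\mu-|J|\ge c$, one gets $|A_t|^2+\Ric(\nu_t,\nu_t)=|A_t|^2+|X_t|^2+(\mu-c)\ge 0$, which forces $A_t\equiv 0$, $X_t\equiv 0$, $\mu\equiv c$, $J\equiv 0$ on $U$. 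Then $K|_{\Sigma_t}=-A_t\equiv 0$, and since $X_t\equiv 0$ and $Q_t\equiv 0$ the identity $L_t\varphi_t=\theta'(t)=0$ reduces to $\Delta\varphi_t=0$, so $\varphi_t$ is constant on each closed leaf; hence the foliation is by equidistant spheres, and $A_t\equiv 0$ gives $g=dt^2+g_0$ on $U$ with $g_0$ of constant curvature $c$, while $K|_{\Sigma_t}\equiv 0$ and $J\equiv 0$ give $K=a(t)\,dt^2$ and $\mu=c$, $J=0$ on $U$ — the asserted splitting.

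I expect the main technical obstacle to be the construction and control of the local constant-$\theta^+$ foliation: since the drift $X$ is generally nonzero, $L$ is not self-adjoint, so the implicit function theorem step must be organized around the adjoint principal eigenfunction, and one must check that the leaves remain embedded graphs lying in $M^+$; this I would import essentially verbatim from \cite{GalMots2,EicGalMen}. The conceptual crux, by contrast, is the leaf-by-leaf estimate $\lambda_1(L_t)\le 0$, where the locally weakly outermost hypothesis, the sharp area lower bound from local area minimization, and the dominant energy condition must all be used at once; once $\theta\equiv 0$ is established, the passage to the product structure is routine.
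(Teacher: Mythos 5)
This theorem is not proved in the present paper; it is quoted as background from \cite{GalMen}, and the argument given there runs along essentially the lines you propose: infinitesimal rigidity at $\Sigma_0$ forces $\lambda_1(L)=0$; the implicit-function-theorem construction of \cite{GalMots2} produces a collar foliation by surfaces of constant $\theta^+$; the locally weakly outermost hypothesis gives $\theta(t)\ge 0$ while local area minimization gives $\A(\Sigma_t)\ge 4\pi/c$; the eigenvalue/Gauss--Bonnet estimate then forces $\theta\equiv 0$, every leaf becomes a rigid MOTS, and the minimality and stability of the leaves (each being locally area minimizing) kill $A_t$ and $X_t$ and pin down $\mu=c$, $J=0$, yielding the splitting. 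So in structure and in all the main ideas your proposal matches the known proof, and the final stability-with-test-function-$1$ computation you give is correct.

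The one step you state too strongly is the derivation of $\theta'(t)\le 0$. The identity $\tfrac{d}{dt}\theta^+(\Sigma_t)=L_t\varphi_t$ with $L_t$ the operator whose zeroth-order part is built from $Q_t=\tfrac12 S_{\Sigma_t}-(\mu+J(\nu_t))-\tfrac12|\chi^+|^2$ is the linearization of $\theta^+$ only at a leaf with $\theta^+=0$; on the off-MOTS leaves of the constant-$\theta^+$ foliation the linearization carries additional zeroth-order terms proportional to $\theta(t)$ (they are exactly the terms discarded when the variation formula is rewritten using $\theta^+=0$). Hence pairing with the adjoint principal eigenfunction, together with your estimate $\lambda_1\,\A(\Sigma_t)\le 4\pi-c\,\A(\Sigma_t)\le 0$ (this is where the area bound and the energy condition enter), yields only $\theta'(t)\le C\,\theta(t)$ for a locally uniform constant $C$, not $\theta'\le 0$ outright. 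This does not damage the conclusion: since $\theta(0)=0$ and $\theta\ge 0$, a Gronwall/integrating-factor argument gives $\theta\equiv 0$, which is precisely how \cite{GalMots2,GalMen} close this step; you should incorporate that correction. A very minor further omission: for conclusion (b) one still has to check that $a=K(\p_t,\p_t)$ depends only on $t$, which follows routinely from $J=0$ once $K=a\,dt^2$ pointwise.
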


This theorem extends to the general non-time-symmetric setting the local rigidity statements in \cite{BraBreNev}. The local rigidity obtained in \cite{BraBreNev} is then used to obtain a global rigidity result; see \cite[Proposition~11]{BraBreNev}. In Theorem~\ref{thm:main.boundary} in the next section, we obtain a global version of Theorem~\ref{thm:local.splitting}. A key improvement in this global rigidity result is that it does not require the `weakly outermost' assumption, and hence parallels somewhat more closely the global result in \cite{BraBreNev}.

Now, we recall two topological concepts that are important for our purposes; see also \cite{EicGalMen}.

We say that $M$ satisfies the \sl{homotopy condition} with respect to $\Sigma\subset M$ provided there exists a continuous map $\rho:M\to\Sigma$ such that $\rho\circ i:\Sigma\to\Sigma$ is homotopic to~$\id_\Sigma$, where $i:\Sigma\hookrightarrow M$ is the inclusion map (for example, if $\Sigma$ is a retract of $M$).

On the other hand, a closed not necessarily connected manifold $N$ of dimension~$m$ is said to satisfy the \sl{cohomology condition} if there are $m$ classes $\omega_1,\ldots,\omega_m$ in the first cohomology group $H^1(N)$, with integer coefficients, whose cup product 
\begin{align*}
\omega_1\smile\cdots\smile\omega_m\in H^m(N)
\end{align*}
is nontrivial. For example, the $m$-torus $T^m=S^1\times\cdots\times S^1$ satisfies the cohomology condition. More generally, the connected sums $T^m\,\sharp\,Q$ satisfy the cohomology condition for any closed $m$-manifolds $Q$. A version of this condition is considered in \cite[Theorem~5.2]{SY2017}. Here, we are using the form of the condition as it appears in \cite[Theorem~2.28]{Lee}. A manifold $N$ satisfying this cohomology condition has a component that does not carry a metric of positive scalar curvature; see the discussion in \cite{Lee}.

We will make use of the following theorem (mentioned in the introduction) in several situations.

\begin{thm}[{\cite[Theorem~1.2]{EicGalMen}}]\label{thm:EicGalMen}
Let $(M,g,K)$ be an $n$-dimensional, $3\le n\le 7$, compact-with-boundary initial data set. Suppose that $(M,g,K)$ satisfies the dominant energy condition, $\mu\ge|J|$. Suppose also that the boundary can be expressed as a disjoint union $\p M=\Sigma_0\cup S$ of nonempty unions of components such that the following conditions hold:
\begin{enumerate}
\item[(1)] $\theta^+\le0$ on $\Sigma_0$ with respect to the normal that points into $M$,
\item[(2)] $\theta^+\ge0$ on $S$ with respect to the normal that points out of $M$,
\item[(3)] $M$ satisfies the homotopy condition with respect to $\Sigma_0$, and
\item[(4)] $\Sigma_0$ satisfies the cohomology condition.
\end{enumerate}
Then the following hold:
\begin{enumerate}
\item[(a)] $M\cong[0,\ell]\times\Sigma_0$ for some $\ell>0$.
\item[] Let $\Sigma_t\cong\{t\}\times\Sigma_0$ with unit normal $\nu_t$ in direction of the foliation.
\item[(b)] $\chi^+=0$ on $\Sigma_t$ for every $t\in[0,\ell]$.
\item[(c)] $\Sigma_t$ is a flat $(n-1)$-torus with respect to the induced metric for every $t\in[0,\ell]$.
\item[(d)] $\mu+J(\nu_t)=0$ on $\Sigma_t$ for every $t\in[0,\ell]$. In particular, $\mu=|J|$ on $M$.
\end{enumerate}
\end{thm}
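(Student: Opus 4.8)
The plan is to produce a distinguished stable MOTS inside $M$, to establish rigidity on it from the stability inequality together with the topological hypotheses, and then to propagate this rigidity across $M$ by a foliation argument. Conditions (1) and (2) say precisely that $\Sigma_0$ is a weakly outer trapped hypersurface (with respect to the normal into $M$) and that $S$ is an outer barrier, along which $\theta^+\ge 0$. Hence the collection of weakly outer trapped hypersurfaces homologous to $\Sigma_0$ is nonempty and, since $S$ is not strictly weakly outer trapped, is bounded away from $S$. By the existence and regularity theory for MOTS (Andersson--Metzger for $n=3$, Eichmair for $3\le n\le 7$) there is a smooth embedded \emph{outermost} such hypersurface $\Sigma$, which is necessarily a MOTS and separates $\Sigma_0$ from $S$; write $\Omega$ for the region between $\Sigma_0$ and $\Sigma$, and $M^+$ for the one between $\Sigma$ and $S$. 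Being outermost, $\Sigma$ is in particular weakly outermost, hence stable: $\lambda_1(L)\ge 0$, with positive principal eigenfunction $\phi_1$.

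I would then transfer the cohomology condition from $\Sigma_0$ to $\Sigma$. Using the retraction-type map $\rho\colon M\to\Sigma_0$ from the homotopy condition, $\rho^*$ is injective on integral cohomology (since $i_{\Sigma_0}^*\rho^*=\id$), and pairing the classes $\rho^*\omega_1,\dots,\rho^*\omega_{n-1}\in H^1(M)$ against $(i_\Sigma)_*[\Sigma]=(i_{\Sigma_0})_*[\Sigma_0]\in H_{n-1}(M)$ (these coincide because $\Sigma$ is homologous to $\Sigma_0$) shows that the restrictions $(\rho|_\Sigma)^*\omega_i$ have nonzero cup product; thus $\Sigma$---and in fact every hypersurface of $M$ homologous to $\Sigma_0$---satisfies the cohomology condition. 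Now combine stability with the DEC: on $\Sigma$ one has $\mu+J(\nu)\ge\mu-|J|\ge 0$, and $\lambda_1(L)\ge 0$ implies, by the Galloway--Schoen symmetrization, that the symmetrized operator $-\Delta_\Sigma+Q$ has nonnegative principal eigenvalue. The standard conformal deformation argument (for $n-1\ge 3$; Gauss--Bonnet together with Proposition~\ref{prop:infinitesimal.rigidity} when $n=3$) then shows that either $\Sigma$ carries a metric of positive scalar curvature, or the rigid alternative holds: $\chi^+\equiv 0$ on $\Sigma$, $\mu+J(\nu)\equiv 0$ on $\Sigma$, $\lambda_1(L)=0$, and the induced metric on $\Sigma$ is scalar flat. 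The first option is excluded because $\Sigma$ satisfies the cohomology condition and hence admits no metric of positive scalar curvature (Schoen--Yau/Gromov--Lawson, in the relevant dimensions). So the rigid alternative holds; since a scalar flat metric on a manifold admitting no metric of positive scalar curvature is Ricci flat, and since a closed Ricci flat manifold with $b_1$ at least its dimension is a flat torus (the cohomology condition forcing $b_1\ge n-1$), $\Sigma$ is a flat $(n-1)$-torus. This establishes (b), (c), (d) on $\Sigma$ itself.

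Finally, I would spread the rigidity off $\Sigma$. Since $\lambda_1(L)=0$ with positive eigenfunction, the implicit function theorem (in Lyapunov--Schmidt form, $L$ having one-dimensional kernel and cokernel, each spanned by a positive function) produces a foliation $\{\Sigma_t\}$ of a two-sided neighborhood of $\Sigma$ by hypersurfaces of \emph{constant} null expansion $\theta^+(\Sigma_t)=c(t)$, with $\Sigma_0=\Sigma$ and $c(0)=0$. Writing $\phi_t>0$ for the lapse, $L_t\phi_t=c'(t)$ on $\Sigma_t$; dividing by $\phi_t$, integrating and completing a square gives
\[
c'(t)\int_{\Sigma_t}\phi_t^{-1}
=\int_{\Sigma_t}\!\Big(Q_t-\big|\nabla\log\phi_t-X_t\big|^2\Big)
\le\int_{\Sigma_t}\!\Big(\tfrac12 S_{\Sigma_t}-\big(\mu+J(\nu_t)\big)\Big),
\]
and pairing $L_t\phi_t=c'(t)$ with the positive adjoint eigenfunction of $L_t$ shows that $c'(t)$ has the sign of $\lambda_1(L_t)$. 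On the $M^+$ side any leaf with $c(t)\le 0$ would be a weakly outer trapped hypersurface lying strictly outside $\Sigma$, contradicting its outermost property; combining this with the identities above (and the fact that each $\Sigma_t$ inherits the cohomology condition), a connectedness argument on $[0,\ell]$ propagates the rigid structure along the foliation, forcing $c\equiv 0$, forcing every leaf to be a flat $(n-1)$-torus with $\chi^+\equiv 0$ and $\mu+J(\nu_t)\equiv 0$, and forcing the foliation to sweep out all of $M$ from $S$ to $\Sigma_0$ (in particular $\Sigma=S$). This yields $M\cong[0,\ell]\times\Sigma_0$ together with (a)--(d) on every slice.

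The step I expect to be the main obstacle is this last one. Two points are delicate. First, the a priori control needed to run the continuation: when $n-1>2$ the total scalar curvature $\int_{\Sigma_t}S_{\Sigma_t}$ need not be nonpositive (tori carry metrics of positive total scalar curvature), so one cannot close the estimate with a single global identity and must argue leaf by leaf, combining the displayed inequality with the sign of $\lambda_1(L_t)$ and the conformal/topological obstruction on each $\Sigma_t$. Second, converting the one-sided notion ``$\Sigma$ is outermost in $M$'' into the two-sided product structure---in particular excluding a non-rigid collar between $\Sigma$ and $S$---is where the strong maximum principle for the MOTS operator, together with the DEC, is used most essentially. By contrast, the existence step rests on the (nontrivial but citable) Andersson--Metzger/Eichmair theory, and the topology transfer in the second step is essentially formal.
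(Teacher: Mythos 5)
A preliminary remark: the paper does not prove this statement at all --- it is quoted verbatim from \cite[Theorem~1.2]{EicGalMen} and used as a black box --- so your attempt has to be measured against the argument given there, whose pattern the present authors reproduce in their proof of Theorem~\ref{thm:main.boundary}. Your first two stages are essentially the right ingredients and agree with that argument: produce a MOTS via the Andersson--Metzger/Eichmair theory (Theorem~\ref{thm:MOTS.existence}), transfer the cohomology condition to any hypersurface homologous to $\Sigma_0$ via the map $\rho$ from the homotopy condition, and use Galloway--Schoen symmetrization together with the resulting no-PSC obstruction to force the rigid alternative ($\chi^+\equiv0$, $\mu+J(\nu)\equiv0$, $\lambda_1=0$, scalar-flat, hence Ricci-flat, hence a flat $(n-1)$-torus). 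One technical slip even here: Theorem~\ref{thm:MOTS.existence} requires $\theta^+>0$ strictly on the outer barrier, whereas $S$ only satisfies $\theta^+\ge0$, so the existence step needs a perturbation or a case distinction (in the paper's Theorem~\ref{thm:main.boundary} this is handled by perturbing and passing to the reversed data $(W,g,-K)$).

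The genuine gap is the final propagation step, which you yourself flag as the main obstacle. Anchoring the constant-$\theta^+$ foliation at an \emph{outermost} MOTS $\Sigma$ and sweeping ``inward to $\Sigma_0$'' does not work as described: the outermost property only constrains hypersurfaces in $M^+$, so it gives no sign control on $c(t)$ for leaves on the $\Sigma_0$ side; the region $\Omega$ between $\Sigma_0$ and $\Sigma$ is never dealt with; and the assertion ``$\Sigma=S$'' is exactly what would have to be proved. Even granting $\Sigma=S$, excluding $c(t)\neq0$ on an inward leaf is not a connectedness argument --- a leaf with $c(t)<0$ inside $S$ contradicts nothing directly; ruling it out requires rerunning the whole contradiction machinery (perturb the leaf, apply MOTS existence to the region between it and $S$ with the reversed data $(g,-K)$, obtain an outermost stable MOTS there, and contradict outermostness via the rigidity and the local foliation) at each stage. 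The argument in \cite{EicGalMen} inverts your architecture precisely to avoid this: it first uses that machinery to show that $\Sigma_0$ \emph{itself} is a MOTS and is weakly outermost, and then foliates outward from $\Sigma_0$; weak outermostness forces $c(t)\ge0$, the symmetrization/no-PSC computation forces $c(t)\le0$, and an open--closed continuation argument, with the strong maximum principle when the leaves meet $S$, yields $M\cong[0,\ell]\times\Sigma_0$ with (b)--(d) on every slice. So your outline assembles the correct tools, but the global splitting --- the actual content of the theorem --- is not established.
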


The following is the basic existence result for MOTS due to L.~Andersson and J.~Metzger in $3$-dimensions, and M.~Eichmair in dimensions $3\le n\le 7$. It is used in the proof of Theorem~\ref{thm:main.boundary}, and is the source of the dimension restriction appearing in various results discussed herein.

\begin{thm}[Existence of MOTS, \cite{AndMet,Eic09,Eic10}]\label{thm:MOTS.existence}
Let $(M,g,K)$ be an $n$-dimensional, $3\le n\le 7$, compact-with-boundary initial data set. Suppose that the boundary can be expressed as a disjoint union $\p M=\Sigma_{\inn}\cup\Sigma_{\out}$, where $\Sigma_{\inn}$ and $\Sigma_{\out}$ are nonempty unions of components of $\p M$ with $\theta^+\le 0$ on $\Sigma_{\inn}$ with respect to the normal pointing into $M$ and $\theta^+>0$ on $\Sigma_{\out}$ with respect to the normal pointing out of $M$. Then there is an outermost MOTS in $(M,g,K)$ that is homologous to $\Sigma_{\out}$.
\end{thm}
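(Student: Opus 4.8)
The plan is to follow the Jang-equation strategy for producing marginally outer trapped surfaces -- due in spirit to Schoen--Yau and carried out for MOTS by Andersson--Metzger in dimension three and by Eichmair in dimensions $3\le n\le 7$ -- realized equivalently through almost-minimizers of a weighted perimeter functional. The idea is to realize the desired MOTS as the blow-up locus of solutions to a regularized Jang-type equation over $M$, with the two boundary pieces $\Sigma_{\inn}$ and $\Sigma_{\out}$ serving as barriers. The hypothesis $\theta^+\le 0$ on $\Sigma_{\inn}$ (with respect to the normal pointing into $M$) turns a collar of $\Sigma_{\inn}$ into a sub-solution barrier for the Jang operator
\[
\mathcal{H}(f)=H_{\mathrm{graph}(f)}-\tr_{\mathrm{graph}(f)}K,
\]
while $\theta^+>0$ on $\Sigma_{\out}$ (with respect to the normal pointing out of $M$) turns a collar of $\Sigma_{\out}$ into a super-solution barrier that controls solutions as one approaches $\Sigma_{\out}$. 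Both constructions rest on the standard computation relating the expansion $\theta^+$ of a hypersurface to the Jang operator evaluated on functions whose graphs become vertical along that hypersurface.

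Concretely, for each small $\tau>0$ I would solve the capillarity-regularized equation $\mathcal{H}(f_\tau)=\tau f_\tau$ on $M$; the term $\tau f_\tau$ forces a uniform bound on $f_\tau$ on compact subsets away from $\Sigma_{\out}$, and the barriers above yield existence together with the required boundary behavior. Interior gradient and curvature estimates for $f_\tau$, uniform in $\tau$, come from the Schoen--Yau / Andersson--Metzger / Eichmair analysis of Jang's equation, the curvature estimate being exactly where the restriction $n\le 7$ enters, as in the regularity theory for stable minimal hypersurfaces. Letting $\tau\to 0$, the graphs of $f_\tau$ subconverge to a hypersurface of $M\times\mathbb{R}$ whose vertical part projects to a closed hypersurface $\Sigma\subset M$; one then checks that each component of $\Sigma$ satisfies $\theta^+=0$ with respect to the appropriate normal and that $\Sigma$ is homologous to $\Sigma_{\out}$ in $M$. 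In dimensions $3\le n\le 7$ this limiting $\Sigma$ is smooth, by the regularity theory for the relevant almost-minimizing surfaces together with the uniform curvature bound.

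To upgrade this to an \emph{outermost} MOTS, I would consider the collection of all smooth closed hypersurfaces in $M$ that are homologous to $\Sigma_{\out}$ and satisfy $\theta^+\le 0$, and form the topological boundary $\hat\Sigma$ of the union of the ``outside'' regions they bound. Applying the compactness and regularity package above to a suitable maximizing sequence, and using the semicontinuity of the expansion under such limits, one gets that $\hat\Sigma$ is realized by a smooth weakly outer trapped hypersurface, which is outermost by construction. Finally, an outermost weakly outer trapped hypersurface must in fact be a MOTS: if $\theta^+<0$ on part of it, the first variation formula \eqref{eq:first.theta} together with the strong maximum principle of Andersson--Mars--Simon lets one push it strictly outward while keeping $\theta^+\le 0$, contradicting outermostness.

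The main obstacle is the analytic core described in the second paragraph: establishing the a priori estimates for the regularized Jang equation that simultaneously prevent the graphs from escaping through $\Sigma_{\out}$ and provide enough regularity to extract a smooth limiting MOTS, rather than a merely stationary varifold with a possibly nonempty singular set. The interior curvature estimate is the crucial and delicate ingredient, and it is what confines the whole argument to $n\le 7$. Verifying that the blow-up locus is nonempty, smooth, homologous to $\Sigma_{\out}$, and can be taken outermost is the hard part; by comparison the barrier constructions and the maximum-principle step showing ``outermost implies MOTS'' are routine once the Andersson--Metzger and Eichmair machinery is assumed.
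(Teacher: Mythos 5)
The paper does not prove this statement: it is quoted directly from the literature (\cite{AndMet,Eic09,Eic10}), so there is no internal proof to compare against. Your outline is essentially the strategy of those references --- Jang-equation blow-up with the capillarity regularization $\mathcal{H}(f_\tau)=\tau f_\tau$, barriers built from the sign conditions on $\theta^+$ at the two boundary pieces, curvature estimates supplying the $n\le 7$ restriction, and a trapped-region argument for outermostness --- and it correctly identifies where the real analytic work lies. The one caveat worth flagging is that the upgrade to an \emph{outermost} MOTS is not a soft semicontinuity argument: it rests on Andersson--Metzger's curvature and area estimates for stable MOTS and the attendant compactness theory, which your sketch subsumes under the assumed machinery rather than addressing directly.
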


\section{The compact-with-boundary cases}\label{sec:boundary.cases}

In this section, we obtain several global initial data results. These results, in turn, will be applied in the next section to the case that the initial data manifold is closed. The first is a global version of Theorem~\ref{thm:local.splitting}; see the comments above, after the statement of Theorem~\ref{thm:local.splitting}.

\begin{thm}\label{thm:main.boundary}
Let $(M,g,K)$ be a $3$-dimensional compact-with-boundary initial data set. Suppose that $(M,g,K)$ satisfies the energy condition $\mu-|J|\ge c$ for some constant $c>0$. Suppose also that the boundary can be expressed as a disjoint union $\p M=\Sigma_0\cup S$ of nonempty unions of components such that the following conditions hold:
\begin{enumerate}
\item[(1)] $\theta^+\le0$ on $\Sigma_0$ with respect to the normal that points into $M$,
\item[(2)] $\theta^+\ge0$ on $S$ with respect to the normal that points out of $M$,
\item[(3)] $M$ satisfies the homotopy condition with respect to $\Sigma_0$,
\item[(4)] the relative homology group $H_2(M,\Sigma_0)$ vanishes, and
\item[(5)] $\Sigma_0$ minimizes area.
\end{enumerate}
Then $\Sigma_0$ is topologically $S^2$ and its area satisfies,
\begin{align*}
\A(\Sigma_0)\le\frac{4\pi}{c}.
\end{align*}
Moreover, if $\A(\Sigma_0)=4\pi/c$, then the following hold:
\begin{enumerate}
\item[(a)] $(M,g)$ is isometric to $([0,\ell]\times\Sigma_0,dt^2+g_0)$ for some $\ell>0$, where $g_0$ - the induced metric on $\Sigma_0$ - has constant Gaussian curvature $\kappa_{\Sigma_0}=c$,
\item[(b)] $K=a\,dt^2$ on $M$, where $a\in C^\infty(M)$ depends only on $t\in[0,\ell]$, and
\item[(c)] $\mu=c$ and $J=0$ on $M$.
\end{enumerate}
\end{thm}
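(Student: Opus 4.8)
The plan is to use Theorem~\ref{thm:MOTS.existence} to produce a MOTS $\Sigma$ in $M$ that is homologous to $\Sigma_0$, then to show, using hypotheses (3)--(5), that $\Sigma$ must in fact coincide (as a homology class, and after a limiting/outermost argument) with $\Sigma_0$ itself, so that $\Sigma_0$ is stable and Proposition~\ref{prop:infinitesimal.rigidity} applies. First I would dispose of a trivial case: if $\theta^+\equiv 0$ on $\Sigma_0$ then $\Sigma_0$ is already a MOTS, and conditions (4)--(5) together with the homotopy condition (3) let us invoke the infinitesimal rigidity proposition directly once stability is checked. In general, the homology condition (4), $H_2(M,\Sigma_0)=0$, is what forces any surface homologous to $\Sigma_0$ to actually separate $M$ into the region between it and $\Sigma_0$ and the rest; combined with the outermost property from Theorem~\ref{thm:MOTS.existence}, this is designed to squeeze the outermost MOTS against $\Sigma_0$. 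The area-minimizing hypothesis (5) then gives $\A(\Sigma_0) \le \A(\Sigma) \le 4\pi/c$, where the second inequality is the area bound of Proposition~\ref{prop:infinitesimal.rigidity} applied to the stable MOTS $\Sigma$, and the first is because $\Sigma$ is homologous to $\Sigma_0$. This proves the area inequality and, via part (a) of the Proposition, that $\Sigma$ (hence, in the equality case, $\Sigma_0$) is a round $S^2$.

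For the rigidity statement, assume $\A(\Sigma_0)=4\pi/c$. Then all the inequalities above are equalities, so the outermost MOTS $\Sigma$ is itself area minimizing with $\A(\Sigma)=4\pi/c$, and by the equality case of Proposition~\ref{prop:infinitesimal.rigidity} it is a round sphere with $\chi^+=0$ and $\mu+J(\nu)=c$ on it. I would next argue that $\Sigma$ must be a component of $\p M$, i.e.\ equal to $\Sigma_0$: if $\Sigma$ lay strictly inside $M$, the region $M^+$ it bounds together with $\Sigma_0$ would be a compact initial data set with boundary $\Sigma_0 \cup \Sigma$, satisfying the DEC (indeed $\mu-|J|\ge c>0$), with $\theta^+\le 0$ on $\Sigma_0$ (inward) and $\theta^+\ge 0$ on $\Sigma$ (outward, since $\Sigma$ is a MOTS), and inheriting the homotopy condition with respect to $\Sigma_0$ via the map $\rho$. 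The cohomology condition is automatic for $\Sigma_0\cong S^2$? No --- here is the subtlety: Theorem~\ref{thm:EicGalMen} requires $\Sigma_0$ to satisfy the cohomology condition, which $S^2$ does not. So instead I would run the local argument: Theorem~\ref{thm:local.splitting}, or rather its proof, produces a collar $[0,\delta)\times\Sigma$ on the $\Sigma_0$-side of $\Sigma$ on which $g=dt^2+g_0$, $K=a\,dt^2$, $\mu=c$, $J=0$. The standard continuation argument --- push the MOTS to the end of the collar, where it is still area minimizing (areas are constant along the collar) and still has $\theta^+=0$, hence is again an infinitesimally rigid MOTS --- lets one extend this splitting. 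Let $I\subset[0,\infty)$ be the maximal interval over which the splitting $([0,t_0)\times\Sigma, dt^2+g_0)$ isometrically embeds into $M$ carrying the stated $K$, $\mu$, $J$; one shows $I$ is open (local splitting/continuation) and closed (the metric, $K$, and curvature data all extend continuously, and the limiting slice is again a rigid MOTS by the same argument), so the splitting propagates until the slice $\Sigma_{t_0}$ hits $\p M$; by construction the only boundary component it can hit is $S$. Thus $M\cong[0,\ell]\times\Sigma_0$ with the product metric, $K=a\,dt^2$, and $\mu=c$, $J=0$ throughout, which is (a)--(c). In particular this simultaneously shows the original $\Sigma$ was $\Sigma_0$ (take $t_0=0$ is the start) --- more carefully, one applies the whole argument starting from $\Sigma_0$ once one knows $\theta^+=0$ there, which follows because $\Sigma_0$, being area minimizing and homologous to the rigid outermost MOTS $\Sigma$ of area $4\pi/c$, is itself a minimal-area representative and the first variation of $\theta^+$ forces $\theta^+\ge 0$; combined with (1) this gives $\theta^+=0$ on $\Sigma_0$.

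The main obstacle I anticipate is the global continuation argument showing that the local splitting from Theorem~\ref{thm:local.splitting} propagates all the way across $M$ to the boundary component $S$, rather than degenerating or failing to close up. Two points need care here. First, the closedness of the maximal splitting interval: one must show that as $t_0$ increases the induced metrics stay uniformly controlled (they are all isometric to the fixed round $(\Sigma_0,g_0)$, which helps) and that $a(t)$ and the second fundamental form of the slices do not blow up --- this should follow from the Riccati/Jacobi equation along the foliation together with the rigidity $\chi^+=0$, but writing it out requires the computation showing that the slices $\Sigma_t$ remain MOTS, which is where the energy condition $\mu-|J|\ge c$ and the equality $\mu+J(\nu_t)=c$ are used in tandem. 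Second, one must ensure that the expanding collar actually exhausts a neighborhood up to $S$ and cannot, say, wrap around or accumulate --- here hypothesis (4), $H_2(M,\Sigma_0)=0$, and the homotopy condition (3) are the tools: (3) prevents the retraction $\rho:M\to\Sigma_0$ from being obstructed and, as in \cite{EicGalMen}, forces each slice $\Sigma_t$ to be homologically nontrivial and hence to be a genuine cross-section, while (4) guarantees there is no "extra" topology between $\Sigma_0$ and $S$ for the foliation to miss. I would model this part of the argument closely on the corresponding step in the proof of \cite[Theorem~1.2]{EicGalMen} (Theorem~\ref{thm:EicGalMen} above), adapting it from the torus/cohomology setting to the $S^2$/area-minimizing setting.
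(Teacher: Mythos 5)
Your toolkit (MOTS existence, infinitesimal rigidity, local splitting, a continuation argument) is the right one, but there are genuine gaps at exactly the places where the paper's proof does its work. First, you cannot apply Theorem~\ref{thm:MOTS.existence} directly in $(M,g,K)$: that theorem requires $\theta^+>0$ \emph{strictly} on the outer barrier, while hypothesis (2) only gives $\theta^+\ge 0$ on $S$; moreover the outermost MOTS it would produce is pushed toward $S$, so nothing "squeezes" it against $\Sigma_0$. The paper's mechanism is different: arguing by contradiction under $\A(\Sigma_0)\ge 4\pi/c$, if $\theta^+_K\not\equiv 0$ on $\Sigma_0$ one perturbs $\Sigma_0$ slightly into $M$ to a surface $\Sigma$ with $\theta^+_K<0$, and then applies the existence theorem to the region $W$ between $\Sigma$ and $S$ with the \emph{reversed} data $(W,g,-K)$, where $S$ is the non-strict barrier and the perturbed surface the strict one. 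The outermost MOTS $\Sigma'$ so obtained is connected by (4), has area $\ge\A(\Sigma_0)\ge 4\pi/c$ by (5), hence exactly $4\pi/c$ by Proposition~\ref{prop:infinitesimal.rigidity}, and then Theorem~\ref{thm:local.splitting} foliates an outer neighborhood of $\Sigma'$ by MOTS, contradicting outermostness; the same reversal shows $\Sigma_0$ is weakly outermost. This sign-reversal device, which circumvents the non-strict inequality on $S$ and proves that $\Sigma_0$ itself is a (weakly outermost) MOTS, is absent from your proposal. Your substitute claim that in the equality case "$\theta^+\ge 0$ on $\Sigma_0$ follows from area minimization and the first variation of $\theta^+$" does not work: minimization controls only the mean curvature ($H\ge 0$ toward the interior), not $\tr_{\Sigma_0}K$, so it gives no sign on $\theta^+=\tr_{\Sigma_0}K+H$. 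Once $\Sigma_0$ is known to be an area-minimizing weakly outermost MOTS, the paper applies Theorem~\ref{thm:local.splitting} at $\Sigma_0$ itself (not at an interior MOTS) and continues the collar toward $S$, using the strong maximum principle when the limiting leaf touches $S$.

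Two further gaps: you never prove the unconditional claim that $\Sigma_0$ is topologically $S^2$ (your argument yields sphericity only of the auxiliary MOTS, and of $\Sigma_0$ in the equality case); the paper obtains it by noting that a non-spherical $\Sigma_0\cong T^2\,\sharp\,Q$ satisfies the cohomology condition, so Theorem~\ref{thm:EicGalMen} would force $\mu=|J|$, contradicting $\mu-|J|\ge c>0$. And your use of local splitting at the interior MOTS $\Sigma$ is unjustified as stated: Theorem~\ref{thm:local.splitting} requires the MOTS to be locally weakly outermost on the side toward which one splits, whereas the outermost property from the existence theorem concerns the side toward $S$, not the $\Sigma_0$-side where you want the collar (and your continuation then propagates that $\Sigma_0$-side collar "until it hits $S$," which is directionally inconsistent). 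Finally, comparing $\A(\Sigma_0)\le\A(\Sigma)$ with Proposition~\ref{prop:infinitesimal.rigidity} requires the produced MOTS to be connected, or its null-homologous components discarded; this is precisely where hypothesis (4) enters in the paper, and it needs to be said.
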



\begin{proof}
First, observe that $\Sigma_0$ is connected, since $M$ is connected and satisfies the homotopy condition with respect to $\Sigma_0$. 

If $\Sigma_0$ is not homeomorphic to $S^2$, then $\Sigma_0$ is homeomorphic to $T^2\,\sharp\,Q$ for some closed orientable surface $Q$. In particular, $\Sigma_0$ satisfies the cohomology condition and so Theorem~\ref{thm:EicGalMen} applies to $(M,g,K)$. Therefore, $0=\mu-|J|\ge c$ on $M$, which is a contradiction. Then $\Sigma_0$ is topologically $S^2$.

{\bf Claim:} $\Sigma_0$ is a weakly outermost MOTS in $(M,g,K)$ of area $\A(\Sigma_0)=4\pi/c$ unless $\A(\Sigma_0)<4\pi/c$. 

Assume that $\A(\Sigma_0)\ge4\pi/c$.

If $\theta_K^+\le0$ is not identically zero on $\Sigma_0$, it follows from \cite[Lemma~5.2]{AndMet} that there is a surface $\Sigma\subset M$ - obtained by a small perturbation of $\Sigma_0$ into $M$ - such that $\theta_K^+<0$ on $\Sigma$ with respect to the normal pointing away from $\Sigma_0$. Let $W$ be the connected compact region bounded by $\Sigma$ and $S$ in $M$. Observe that $\theta_{-K}^+\le0$ on $S$ with respect to the normal that points into $W$ and $\theta_{-K}^+>0$ on $\Sigma$ with respect to the normal that points out of $W$. Applying the MOTS existence theorem (Theorem~\ref{thm:MOTS.existence}), we obtain an outermost MOTS $\Sigma'$ in $(W,g,-K)$ that is homologous to and disjoint from $\Sigma$. Clearly, $\Sigma'$ is homologous to $\Sigma_0$ in $M$.

Without loss of generality, we may assume that each connected component of $\Sigma'$ is homologically nontrivial in $M$. Also, because $H_2(M,\Sigma_0)=0$, $\Sigma'$ is connected.

Since we are assuming that $\Sigma_0$ minimizes area in its homology class, we have
\begin{align*}
\frac{4\pi}{c}\le\A(\Sigma_0)\le\A(\Sigma').
\end{align*}

On the other hand, because $\Sigma'$ is an outermost MOTS in $(W,g,-K)$, in particular stable, the infinitesimal rigidity (Proposition~\ref{prop:infinitesimal.rigidity}) gives that $\A(\Sigma')=4\pi/c$. Therefore, $\Sigma'$ is an area minimizing outermost MOTS in $(W,g,-K)$ of area $\A(\Sigma')=4\pi/c$ and then the local splitting theorem (Theorem~\ref{thm:local.splitting}) applies so that an outer neighborhood of $\Sigma'$ in $W$ is foliated by MOTS, which is a contradiction.

This proves that $\Sigma_0$ is a MOTS in $(M,g,K)$.

Now, we claim that $\Sigma_0$ is weakly outermost in $(M,g,K)$. If not, there is a surface $\Sigma$ that is homologous to $\Sigma_0$ in $M$ and such that $\theta_K^+<0$ on it. Perturbing $\Sigma$ a bit, we may assume that $\Sigma\cap\Sigma_0=\varnothing$. Also, by the strong maximum principle as in \sl{e.g.} \cite[Proposition~2.4]{AndMet} or \cite[Proposition~3.1]{AshGal}, $\Sigma\cap S=\varnothing$.

As before, without loss of generality, we may assume that each connected component of $\Sigma$ is homologically nontrivial in $M$ and, in particular, $\Sigma$ is connected. Let $W$ be the region in $M$ bounded by $\Sigma$ and $S$. Arguing with $(W,g,-K)$ as above, we have a contradiction. Thus $\Sigma_0$ is weakly outermost.

We have then proved that, if $\A(\Sigma_0)\ge4\pi/c$, then $\Sigma_0$ is a weakly outermost MOTS in $(M,g,K)$. In this case, by the infinitesimal rigidity, $\A(\Sigma_0)=4\pi/c$.

This finishes the proof of the Claim.

We have then obtained that $\Sigma_0$ is homeomorphic to $S^2$ and its area satisfies $\A(\Sigma_0)\le4\pi/c$. Furthermore, if $\A(\Sigma_0)=4\pi/c$, then $\Sigma_0$ is an area minimizing weakly outermost MOTS in $(M,g,K)$. In this case, by the local splitting theorem, there is a collar neighborhood $U\cong[0,\delta)\times\Sigma_0$ of $\Sigma_0$ in $M$ such that conclusions (a), (b), and (c) of the theorem hold on $U$. Clearly, $\Sigma_t\cong\{t\}\times\Sigma_0$ converges to a closed embedded MOTS $\Sigma_\delta$ of area $4\pi/c$ as $t\nearrow\delta$. If $\Sigma_\delta\cap S\neq\varnothing$, by the strong maximum principle, $\Sigma_\delta=S$. If $\Sigma_\delta\cap S=\varnothing$, we can replace $\Sigma_0$ by $\Sigma_\delta$ and $M$ by the complement of $U$ and run the process again. The result then follows by a continuity argument.
\end{proof}

The next two theorems make use of the notion of \sl{$(n-1)$-convexity} of a symmetric $(0,2)$-tensor. Imposing such convexity leads to stronger rigidity. 

We say that a symmetric $(0,2)$-tensor $P$ on $(M,g)$ is \sl{$(n-1)$-convex} if, at every point $p\in M$, the sum of the smallest $(n-1)$ eigenvalues of $P$ with respect to $g$ is nonnegative (in particular, if $P$ is positive semi-definite). This is equivalent to the trace of $P$ with respect to any $(n-1)$-dimensional linear subspace of $T_pM$ being nonnegative, for every $p\in M$. In particular, if $P$ is $(n-1)$-convex, then $\tr_\Sigma P\ge0$ for every hypersurface $\Sigma\subset M$. This convexity condition has been used by the second-named author in \cite{Men} and by the authors, together with M.~Eichmair, in \cite{EicGalMen} in related contexts.

Let $(M,g,K)$ be as in Theorem~\ref{thm:EicGalMen}, and let $\Sigma$ be a closed embedded hypersurface homologous to $\Sigma_0$. 
The next theorem makes use of the functional,
\begin{align*}
\B_\epsilon(\Sigma)=\A(\Sigma)-(n-1)\,\epsilon\,\V(\Sigma),\quad\epsilon=0,1,
\end{align*}
where $\A(\Sigma)$ is the area of $\Sigma$ and $\V(\Sigma)$ is the volume of the region bounded by $\Sigma$ and $\Sigma_0$. In the case $\epsilon=0$, we are just talking about the area functional. In the case $\epsilon=1$, we are talking about the functional associated with hypersurfaces of constant mean curvature $n-1$, sometimes referred to as the \sl{brane action} and denoted by $\B$.

The following theorem extends in a couple of directions Theorem~5.2 in \cite{EicGalMen}.

\begin{thm}\label{thm:Brane.1}
Let $(M,g,K)$ be as in Theorem~\ref{thm:EicGalMen}. Assume that 
\begin{enumerate}
\item[(i)] $K+\epsilon\,g$ is $(n-1)$-convex, where $\epsilon=0$ or $\epsilon=1$, and
\item[(ii)] $\Sigma_0$ and $S$ are such that $\B_\epsilon(\Sigma_0)\le\B_\epsilon(S)$.
\end{enumerate}
Then the following hold:
\begin{enumerate}
\item[(a)] $\Sigma_0$ is a flat $(n-1)$-torus with respect to the induced metric $g_0$,
\item[(b)] $(M,g)$ is isometric to $([0,\ell]\times\Sigma_0,dt^2+e^{2\,\epsilon\,t}g_0)$ for some $\ell>0$,
\item[(c)] $K=(1-\epsilon)a-\epsilon\,g$ on $M$, where $a\in C^\infty(M)$ depends only on $t\in[0,\ell]$, and
\item[(d)] $\mu=0$ and $J=0$ on $M$.
\end{enumerate}
\end{thm}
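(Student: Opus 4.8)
The plan is to reduce to Theorem~\ref{thm:EicGalMen} by producing, from the functional hypothesis (ii), a weakly outermost MOTS-like surface, and then to run a rigidity/splitting argument along the same lines as in the proof of Theorem~\ref{thm:main.boundary} and of \cite[Theorem~5.2]{EicGalMen}. Concretely, I would first record the variational identity: for a normal variation of a hypersurface $\Sigma$ homologous to $\Sigma_0$ with speed $\phi$, one has $\B_\epsilon'(\Sigma)=\int_\Sigma(H-(n-1)\epsilon)\phi$, so critical points of $\B_\epsilon$ are hypersurfaces of constant mean curvature $H=(n-1)\epsilon$ (minimal when $\epsilon=0$, the brane value when $\epsilon=1$). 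The key observation linking this to MOTS is that, because $K+\epsilon g$ is $(n-1)$-convex, $\tr_\Sigma(K+\epsilon g)\ge 0$, i.e. $\tr_\Sigma K\ge -(n-1)\epsilon$, so on a surface with $H\le (n-1)\epsilon$ (pointing appropriately) we get $\theta^+=\tr_\Sigma K + H \le \tr_\Sigma K + (n-1)\epsilon$ — this is not quite $\theta^+\le 0$, so the more delicate point is that one should instead compare with $\theta^+$ directly: if $H=(n-1)\epsilon$ then $\theta^+ = \tr_\Sigma K + (n-1)\epsilon \ge 0$ by $(n-1)$-convexity, and similarly a surface that is $\B_\epsilon$-minimizing from the inside will have $\theta^+\le 0$ on the side facing $\Sigma_0$. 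I would exploit hypothesis (ii), $\B_\epsilon(\Sigma_0)\le\B_\epsilon(S)$, to show that $\Sigma_0$ itself must be weakly outermost in the relevant sense: if there were a competitor homologous to $\Sigma_0$ on which $\theta^+<0$, one perturbs and uses the MOTS existence theorem (Theorem~\ref{thm:MOTS.existence}) applied to $(W,g,-K)$ in the enclosed region, getting an outermost MOTS whose $\B_\epsilon$-value, combined with the minimizing-type inequality coming from (ii) and the monotonicity of $\B_\epsilon$ under the foliation, contradicts the local splitting behaviour.

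Once $\Sigma_0$ is seen to satisfy $\theta^+\le 0$ with respect to the inward normal — which it does as a component of $\partial M$ under the hypotheses of Theorem~\ref{thm:EicGalMen} — and one checks that $\B_\epsilon(\Sigma_0)\le\B_\epsilon(S)$ forces $\Sigma_0$ to be $\B_\epsilon$-minimizing in its homology class (so in particular a critical point, giving $H=(n-1)\epsilon$ and hence, with $(n-1)$-convexity, $\theta^+=0$, i.e. $\Sigma_0$ is a MOTS), I can invoke Theorem~\ref{thm:EicGalMen} directly. Indeed $\Sigma_0$ is a component of $\partial M$ with $\theta^+\le 0$ inward, $S$ has $\theta^+\ge 0$ outward, the homotopy condition holds by hypothesis, and the cohomology condition on $\Sigma_0$ comes for free: either $\Sigma_0$ satisfies it (and we are done applying Theorem~\ref{thm:EicGalMen}), or it does not. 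In the latter case I need a separate argument showing that the brane/area hypothesis still forces $\chi^+=0$ and flatness. The cleanest route is: run the foliation produced by Theorem~\ref{thm:EicGalMen}'s proof machinery — solve $\theta^+=t$ or use the implicit function theorem off the MOTS $\Sigma_0$ to get a foliation $\Sigma_t$ with $\theta^+(t)$ having a sign — and use second-variation of $\B_\epsilon$ together with $(n-1)$-convexity to force the variation to be infinitesimally rigid; then $\chi^+=0$ identifies $A=-K|_\Sigma$, and the Gauss equation plus $\mu\ge|J|$ plus $(n-1)$-convexity of $K+\epsilon g$ pins down $S_{\Sigma_t}=0$, giving the flat torus, and integrating the warping equation $\partial_t g_t = 2(A)= \dots$ yields the warped-product metric $dt^2+e^{2\epsilon t}g_0$.

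The main obstacle, I expect, is the middle step: extracting from the single scalar inequality $\B_\epsilon(\Sigma_0)\le\B_\epsilon(S)$ the full strength of a \emph{minimizing} (or weakly outermost) property for $\Sigma_0$, rather than merely a critical-point property. In the pure area case $\epsilon=0$ this is exactly the kind of argument in the proof of Theorem~\ref{thm:main.boundary} (hypothesis (5) there), but with the brane action $\epsilon=1$ one must be careful that $\V(\Sigma)$, the signed volume between $\Sigma$ and $\Sigma_0$, behaves monotonically along the competing surfaces and along the foliation, and that the exponential warping factor $e^{2t}$ does not spoil the comparison — this is where $(n-1)$-convexity of $K+g$ (as opposed to just $K$) is essential, since it is precisely the condition under which $\theta^+\ge 0$ is preserved as one flows outward by the brane foliation. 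I would handle this by computing $\frac{d}{dt}\B_\epsilon(\Sigma_t)=\int_{\Sigma_t}(\theta^+ - \tr_{\Sigma_t}(K+\epsilon g) )\,\phi_t$ (so that the bad $\tr_{\Sigma_t}K$ term is absorbed), observing the integrand has a definite sign by $(n-1)$-convexity and the sign of $\theta^+$, and concluding that equality in (ii) propagates along the foliation, which then delivers $\theta^+\equiv 0$, $\tr_{\Sigma_t}(K+\epsilon g)\equiv 0$, and ultimately all four conclusions. The remaining identifications of $g$, $K$, $\mu$, and $J$ are then routine given Theorem~\ref{thm:EicGalMen}-style rigidity.
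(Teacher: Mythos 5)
Your proposal contains the correct key idea — monotonicity of $\B_\epsilon(\Sigma_t)$ along the foliation via the first variation formula and $(n-1)$-convexity — but it is buried under a lot of unnecessary and partly mistaken preliminary maneuvering. The fundamental misstep is a misreading of the phrase ``Let $(M,g,K)$ be as in Theorem~\ref{thm:EicGalMen}'': this means \emph{all} the hypotheses of Theorem~\ref{thm:EicGalMen} are assumed, including the cohomology condition on $\Sigma_0$, so Theorem~\ref{thm:EicGalMen} applies outright. There is no dichotomy to treat (``either $\Sigma_0$ satisfies it, or it does not''), no need to produce a weakly outermost MOTS via a contradiction argument with Theorem~\ref{thm:MOTS.existence}, and no need to ``run the proof machinery'' of Theorem~\ref{thm:EicGalMen} to construct the foliation: its conclusions already hand you $M\cong[0,\ell]\times\Sigma_0$ foliated by MOTS $\Sigma_t$, each a flat $(n-1)$-torus with $\chi^+=0$.

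Once you have that, the computation you gesture at in your final paragraph becomes a two-line argument rather than an obstacle. Since each $\Sigma_t$ is a MOTS, $H(t)=-\tr_{\Sigma_t}K$, and $(n-1)$-convexity of $K+\epsilon g$ gives
\begin{align*}
H(t)-(n-1)\epsilon = -\tr_{\Sigma_t}(K+\epsilon g)\le 0,
\end{align*}
so the first variation $\frac{d}{dt}\B_\epsilon(\Sigma_t)=\int_{\Sigma_t}\phi\,(H(t)-(n-1)\epsilon)\le 0$ shows $\B_\epsilon(\Sigma_t)$ is nonincreasing; hypothesis (ii) then forces it constant, hence $H(t)\equiv(n-1)\epsilon=-\tr_{\Sigma_t}K$, which in particular gives $\theta^-=-2(n-1)\epsilon$ on $S$. (Note the sign issue you flagged earlier in your proposal — trying to conclude $\theta^+\le0$ from $H\le(n-1)\epsilon$ and $\tr_\Sigma K\ge-(n-1)\epsilon$ — never arises once you start from $\theta^+\equiv0$.) Finally, the remaining identifications (the warped-product form $dt^2+e^{2\epsilon t}g_0$, the form of $K$, and $\mu=0$, $J=0$) are not carried out from scratch via Gauss and Riccati equations as you sketch; the paper instead cites a further result, \cite[Theorem~1.3]{EicGalMen}, which handles precisely this $\theta^-$ boundary rigidity situation. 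Without a concrete substitute for that citation, your final step is still a gap, though it seems likely a direct argument along the lines you indicate could be made to work with more effort.
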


The convexity assumption holds if, in particular, $K$ satisfies, $K\ge-\epsilon\,g$. In the case $\epsilon = 0$, this would apply to cosmological models that are expanding to the future (in all directions).
\begin{proof}
By Theorem~\ref{thm:EicGalMen},
\begin{enumerate}
\item[-] $M\cong[0,\ell]\times\Sigma_0$ for some $\ell>0$, and
\item[-] each leaf $\Sigma_t\cong\{t\}\times\Sigma_0$ is a MOTS with respect to the unit normal $\nu_t$ in direction of the foliation. 
\end{enumerate}
On the other hand, since $K+\epsilon\,g$ is $(n-1)$-convex, we have
\begin{align}\label{eq:aux.1}
H(t)-(n-1)\,\epsilon\le H(t)+\tr_{\Sigma_t}K=0,
\end{align}
where $H(t)=\div_{\Sigma_t}\nu_t$ is the mean curvature of $\Sigma_t$. 

Now, the first variation of $\B_\epsilon$ gives that
\begin{align}\label{eq:aux.2}
\frac{d}{dt}\B_\epsilon(\Sigma_t)=\int_{\Sigma_t}\phi\,(H(t)-(n-1)\,\epsilon)\,d\Sigma_t\le0,
\end{align}
where $\phi=\langle\nu_t,\p_t\rangle$ is the lapse function of the foliation. Therefore, 
$\B_\epsilon(t)=\B_\epsilon(\Sigma_t)$ is a nonincreasing function on $[0,\ell]$ satisfying $\B_\epsilon(0)\le \B_\epsilon(\ell)$. Thus $\B_\epsilon(t)=\B_\epsilon(\Sigma_t)$ is constant. Inequalities \eqref{eq:aux.1} and \eqref{eq:aux.2} give that $H(t)=(n-1)\,\epsilon=-\tr_{\Sigma_t}K$ for all $t\in[0,\ell]$. In particular, $\theta^-=-2\,(n-1)\,\epsilon$ on $\Sigma_\ell=S$. 

The result then follows directly from \cite[Theorem~1.3]{EicGalMen} (observe that our sign convention in the definition of $\theta^-$ in this work is the opposite of that one in \cite{EicGalMen}).
\end{proof}

In the next theorem we consider $\B=\B_1$ under a modified convexity condition.

\begin{thm}\label{thm:Brane.2}
Let $(M,g,K)$ be as in Theorem~\ref{thm:EicGalMen}. Assume that $-(K+g)$ is $(n-1)$-convex. Then $\B(\Sigma_0)\le\B(S)$. Moreover, if equality holds, we have the following:
\begin{enumerate}
\item [(a)] $(M,g)$ is isometric to $([0,\ell]\times\Sigma_0,dt^2+g_t)$ for some $\ell>0$, where $g_t$ is the induced metric on $\Sigma_t\cong\{t\}\times\Sigma_0$.
\item[(b)] Each $\Sigma_t$ is a flat $(n-1)$-torus with respect to $g_t$ and has constant mean curvature $H(t)=n-1$.
\item[(c)] The scalar curvature of $(M,g)$ satisfies $S\le-n(n-1)$. If equality holds, $(M,g)$ is isometric to $([0,\ell]\times\Sigma_0,dt^2+e^{2\,t}g_0)$.
\item[(d)] For each $t\in[0,\ell]$, $\mu+J(\nu_t)=0$ on $\Sigma_t$. In particular, $\mu=|J|$ on $M$.
\item[(e)] $\tr K\le-n$ on $M$. If equality holds, $K=-g$, $S=-n(n-1)$, $\mu=0$, and $J=0$ on $M$. 
\end{enumerate}
\end{thm}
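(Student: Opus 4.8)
The plan is to follow the scheme of the proof of Theorem~\ref{thm:Brane.1}: apply Theorem~\ref{thm:EicGalMen}, analyze the brane action $\B=\B_1$ along the resulting foliation, and then extract the remaining rigidity from the constraint and Gauss equations. The hypotheses here are exactly those of Theorem~\ref{thm:EicGalMen}, so that theorem gives $M\cong[0,\ell]\times\Sigma_0$ together with a foliation by closed hypersurfaces $\Sigma_t\cong\{t\}\times\Sigma_0$ such that each $\Sigma_t$ is a MOTS with respect to the unit normal $\nu_t$ in the direction of the foliation, $\chi^+=0$ on each $\Sigma_t$, each $\Sigma_t$ is a flat $(n-1)$-torus, and $\mu+J(\nu_t)=0$ on each $\Sigma_t$; in particular $\mu=|J|$ on $M$. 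This already gives conclusion (d) and the flatness in (b). From $\theta^+=\tr_{\Sigma_t}K+H(t)=0$ and $\chi^+=0$ one gets that the second fundamental form of $\Sigma_t$ is $A_t=-K|_{\Sigma_t}$ and $H(t)=\tr A_t=-\tr_{\Sigma_t}K$, while the assumption that $-(K+g)$ is $(n-1)$-convex says precisely that $\tr_{\Sigma_t}(K+g)\le 0$, i.e.\ $H(t)\ge n-1$ pointwise on each $\Sigma_t$.

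I would then compute the first variation of $\B$ along the foliation, exactly as in the proof of Theorem~\ref{thm:Brane.1}: with positive lapse $\phi=\langle\nu_t,\p_t\rangle$,
\[
\frac{d}{dt}\B(\Sigma_t)=\int_{\Sigma_t}\phi\,\big(H(t)-(n-1)\big)\,d\Sigma_t\ge 0 ,
\]
so $\B(\Sigma_t)$ is nondecreasing on $[0,\ell]$ and therefore $\B(\Sigma_0)\le\B(\Sigma_\ell)=\B(S)$, which is the first assertion. If equality holds, $\B(\Sigma_t)$ is constant, so the integrand vanishes identically; as $\phi>0$ and $H(t)-(n-1)\ge 0$, this forces $H(t)\equiv n-1$ on each $\Sigma_t$, hence $\tr_{\Sigma_t}K=-(n-1)$. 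This is the constant mean curvature assertion in (b), and it gives $\theta^-=\tr_{\Sigma_t}K-H(t)=-2(n-1)$ on every leaf, in particular on $S$.

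To obtain conclusion (a) --- that the metric is a genuine Gaussian product $dt^2+g_t$, i.e.\ that the MOTS foliation has unit lapse --- I would show the lapse $\phi$ of the foliation is constant on each leaf. In the present rigid situation, $\chi^+=0$, $S_{\Sigma_t}=0$ and $\mu+J(\nu_t)=0$ make the potential $Q=\tfrac12 S_{\Sigma_t}-(\mu+J(\nu_t))-\tfrac12|\chi^+|^2$ of the MOTS stability operator $L_t$ of $\Sigma_t$ vanish, while $\phi>0$ lies in the kernel of $L_t$ (differentiate $\theta^+(\Sigma_t)\equiv 0$); combining this with the tangential part of the momentum constraint --- which forces $J$ to be normal to each $\Sigma_t$, since $\mu=|J|=-J(\nu_t)$ --- an integration over $\Sigma_t$ should yield that $\phi$ is constant. (Alternatively, feeding $\theta^-=-2(n-1)$ on $S$ into the rigidity of \cite{EicGalMen} as in the proof of Theorem~\ref{thm:Brane.1} would give (a) as well.) With the metric in the form $dt^2+g_t$, I would combine the Gauss equation $0=S_{\Sigma_t}=S-2\Ric(\nu_t,\nu_t)+H(t)^2-|A_t|^2$ with the Riccati equation $\p_t H(t)=-\Ric(\nu_t,\nu_t)-|A_t|^2$ along the unit-lapse foliation: since $H(t)\equiv n-1$, the Riccati equation gives $\Ric(\nu_t,\nu_t)=-|A_t|^2$, and then the Gauss equation gives $S=-|A_t|^2-(n-1)^2$. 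By Cauchy--Schwarz $|A_t|^2\ge(\tr A_t)^2/(n-1)=n-1$, so $S\le-n(n-1)$; if $S\equiv-n(n-1)$ then $A_t=g_t$ everywhere, so $\p_t g_t=2g_t$ and $g_t=e^{2t}g_0$, giving (c). For (e), write $\tr K=-(n-1)+K(\nu_t,\nu_t)$; applying the $(n-1)$-convexity of $-(K+g)$ to the hyperplanes spanned by $\nu_t$ and any $n-2$ vectors of an orthonormal basis $e_1,\dots,e_{n-1}$ of $T_p\Sigma_t$ gives $K(\nu_t,\nu_t)\le K(e_j,e_j)$ for each $j$, and summing gives $K(\nu_t,\nu_t)\le-1$, hence $\tr K\le-n$. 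If $\tr K\equiv-n$, then $K(\nu_t,\nu_t)=-1$ and $K(e_j,e_j)=-1$ for every orthonormal basis of $T_p\Sigma_t$ (hence $K|_{\Sigma_t}=-g_t$), and a further use of the convexity along hyperplanes obtained by rotating $\nu_t$ into tangent directions forces $K(\nu_t,\cdot)|_{\Sigma_t}=0$, so $K=-g$; then $A_t=g_t$, whence $S=-n(n-1)$ by the above, $\mu=\tfrac12(S-|K|^2+(\tr K)^2)=0$, and $J=0$ since $\mu=|J|$.

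The step I expect to be the real obstacle is conclusion (a): upgrading the diffeomorphism $M\cong[0,\ell]\times\Sigma_0$ of Theorem~\ref{thm:EicGalMen} to an isometry onto a Gaussian product $dt^2+g_t$, equivalently showing that the MOTS foliation has constant lapse on each leaf --- this is not part of the conclusion of Theorem~\ref{thm:EicGalMen}. Everything downstream of (a) is routine bookkeeping with the Gauss, Riccati and constraint equations and with Cauchy--Schwarz. It is also worth noting that the convexity hypothesis here, $-(K+g)$ being $(n-1)$-convex, is opposite to the one in Theorem~\ref{thm:Brane.1}; this is why the rigidity here takes the form of a chain of inequalities --- $\B(\Sigma_0)\le\B(S)$, $S\le-n(n-1)$, $\tr K\le-n$ --- with separate equality cases, rather than the single conclusion $K=-g$ obtained there.
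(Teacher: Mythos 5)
Your reduction to Theorem~\ref{thm:EicGalMen}, the monotonicity of $\B$ giving $\B(\Sigma_0)\le\B(S)$, the equality-case conclusion $H(t)\equiv n-1$, and (granting (a)) the Riccati/Gauss derivation of (c) and the algebraic treatment of (e) all agree with the paper's argument. The genuine gap sits exactly where you flag it, conclusion (a), and neither of your two suggestions closes it. For the first: with $Q=0$ the equation $L_t\phi=0$ still contains the vector field $X$ dual to $K(\nu_t,\,\cdot\,)|_{\Sigma_t}$; dividing by $\phi$ and integrating (the standard substitution $u=\ln\phi$) yields only $X=\nabla\ln\phi$ on each leaf, not that $\phi$ is constant. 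Knowing that $J$ is normal to each $\Sigma_t$ does not help: the tangential part of $J=\div(K-(\tr K)g)$ involves normal derivatives of $K$, so it carries no pointwise information about $X$. What the paper actually does is combine the $(n-1)$-convexity of $-(K+g)$ with $\tr_{\Sigma_t}K=-(n-1)$: for the rotated hyperplanes $\pi(s)$ spanned by $e_1,\ldots,e_{n-2},\cos s\cdot e_{n-1}+\sin s\cdot\nu_t$, the function $f(s)=\tr_{\pi(s)}K$ satisfies $f(s)\le-(n-1)=f(0)$, so $f'(0)=0$, i.e. $K(\nu_t,e_{n-1})=0$; hence $X\equiv0$ on every leaf, the equation becomes $\Delta\phi=0$, $\phi$ is constant on each leaf, and the metric takes the Gaussian form $dt^2+g_t$. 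Note that you use precisely this rotation trick later, inside the equality case of (e) --- but it needs only $\tr_{\Sigma_t}K=-(n-1)$, which is already available once $\B$ is constant, and it is what unlocks (a).

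Your alternative route --- feeding $\theta^-=-2(n-1)$ on $S$ into the rigidity invoked in the proof of Theorem~\ref{thm:Brane.1} --- would prove too much and therefore cannot be correct. As used there (with $\epsilon=1$), that rigidity delivers $(M,g)\cong([0,\ell]\times\Sigma_0,dt^2+e^{2t}g_0)$ and $K=-g$; under the present hypotheses with $\B(\Sigma_0)=\B(S)$ these conclusions fail in general. For instance, on $[0,\ell]\times T^2$ take $g=dt^2+e^{2(1+\delta)t}dx^2+e^{2(1-\delta)t}dy^2$ with $0<\delta<1$, and $K=-A-(1+\delta)\,dt^2$, where $A$ is the second fundamental form of the slices: each slice is a MOTS with $\chi^+=0$ and $H\equiv2$, the DEC holds with $\mu=|J|=2\delta(1-\delta)$, $-(K+g)$ is $2$-convex, and $\B(\Sigma_0)=\B(S)$, yet $\tr K=-3-\delta<-3$ and $S=-6-2\delta^2<-6$. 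This is exactly why (c) and (e) are formulated as inequalities with separate equality cases. The point is that the appeal to \cite{EicGalMen} in Theorem~\ref{thm:Brane.1} is made under the convexity of $K+\epsilon\,g$, which is the reverse of the convexity assumed here, so that shortcut is unavailable and the rotation argument above (or some substitute proving $X\equiv0$) is genuinely needed before the Riccati/Gauss step can begin.
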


The convexity assumption holds if, in particular, $K$ satisfies, $K\le-g$. If one views $K$ as being defined with respect to the past directed unit normal, this would apply to cosmological models that are strongly contracting to the past, \sl{e.g.} that begin with a `big bang'.

\begin{proof}
By Theorem~\ref{thm:EicGalMen},
\begin{itemize}
\item[-] $M\cong[0,\ell]\times\Sigma_0$ with
\begin{align}\label{eq:metric}
g=\phi^2dt^2+g_t,
\end{align}
where $g_t$ is the induced metric on $\Sigma_t\cong\{t\}\times\Sigma_0$.
\item[-] Each $(\Sigma_t,g_t)$ is a flat $(n-1)$-torus.
\item[-] Every leaf $\Sigma_t$ is a MOTS in $(M,g,K)$. In fact,
\begin{align*}
0=\chi^+(t)=A(t)+K|_{\Sigma_t},
\end{align*}
where $A(t)$ is the second fundamental form of $\Sigma_t$ computed with respect to the unit normal $\nu_t$ in direction of the foliation.
\item[-] For each $t\in[0,\ell]$, $\mu+J(\nu_t)=0$ on $\Sigma_t$. In particular, $\mu=|J|$ on $M$.
\end{itemize}

Now, since $-(K+g)$ is $(n-1)$-convex, we have
\begin{align}\label{eq:aux.3}
H(t)-(n-1)\ge H(t)+\tr_{\Sigma_t}K=0,
\end{align}
where $H(t)=\tr A(t)$ is the mean curvature of $\Sigma_t$. Then the first variation of $\B$ gives that
\begin{align}\label{eq:aux.4}
\frac{d}{dt}\B(\Sigma_t)=\int_{\Sigma_t}\phi\,(H(t)-(n-1))\,d\Sigma_t\ge0.
\end{align}
Therefore, $\B(t)=\B(\Sigma_t)$ is a nondecreasing function defined on $[0,\ell]$. In particular, $\B(0)\le\B(\ell)$, that is, $\B(\Sigma_0)\le\B(S)$.

If $\B(\Sigma_0)=\B(S)$, then $\B(t)=\B(\Sigma_t)$ is constant. Therefore, inequalities \eqref{eq:aux.3} and~\eqref{eq:aux.4} imply that $H(t)=n-1=-\tr_{\Sigma_t}K$ for all $t\in[0,\ell]$.

Now, fix $t\in[0,\ell]$, $p\in\Sigma_t$, and let $\{e_1,\ldots,e_{n-1}\}$ be an orthonormal basis for $T_p\Sigma_t$. Define
\begin{align*}
\eta(s)=\cos s\cdot e_{n-1}+\sin s\cdot\nu_t,\quad s\in\R,
\end{align*}
and let $\pi(s)$ be the $(n-1)$-dimensional linear subspace of $T_pM$ generated by
\begin{align*}
\{e_1,\ldots,e_{n-2},\eta(s)\}.
\end{align*}
Since $-(K+g)$ is $(n-1)$-convex and $\tr_{\Sigma_t}K=-(n-1)$, we have
\begin{align*}
f(s)\coloneqq\tr_{\pi(s)}K\le-(n-1)\quad\mbox{and}\quad f(0)=\tr_{\Sigma_t}K=-(n-1).
\end{align*}
Therefore, $s=0$ is a critical point of $f(s)$. Observing that 
\begin{align*}
f(s)=\sum_{i=1}^{n-2}K(e_i,e_i)+K(\eta(s),\eta(s)),
\end{align*}
we obtain,
\begin{align*}
0=f'(0)=2K(\eta'(0),\eta(0))=2K(\nu_t,e_{n-1}).
\end{align*}
Analogously, $K(\nu_t,e_i)=0$ for $i=1,\ldots,n-2$. This gives that $X^\flat=K(\nu_t,\,\cdot\,)|_{\Sigma_t}=0$ for all $t\in[0,\ell]$.

On the other hand, the first variation of $\theta^+(t)=0$ reads as
\begin{align*}
\frac{\p\theta^+}{\p t}=-\Delta\phi+2\langle X,\nabla\phi\rangle+(Q+\div X-|X|^2)\phi=-\Delta\phi+Q\phi,
\end{align*}
where 
\begin{align*}
Q=\frac{1}{2}S_{\Sigma_t}-(\mu+J(\nu_t))-\frac{1}{2}|\chi^+(t)|^2=0.
\end{align*}
Thus $\Delta\phi=0$ on $\Sigma_t$ and then $\phi=\phi(t)$ is constant on $\Sigma_t$ for each $t\in[0,\ell]$. Hence, by a simple change of variable in \eqref{eq:metric}, we have
\begin{align}\label{eq:metric.2}
g=dt^2+g_t.
\end{align}
In particular, the $t$-lines are geodesics. Hence, along each leaf $\Sigma=\Sigma_t$, $H=H(t)$ satisfies the scalar Riccati equation,
\begin{align*}
\frac{\p H}{\p t}=-\Ric(\p_t,\p_t)-|A|^2,
\end{align*}
which, since $H(t)=n-1$, implies,
\begin{align*}
\Ric(\p_t,\p_t)+|A|^2=0.
\end{align*}
By the Gauss equation, we have the standard rewriting of the left-hand side in the above equation,
\begin{align*}
\Ric(\p_t,\p_t)+|A|^2=\frac12(S-S_{\Sigma}+|A|^2+H^2).
\end{align*}
Hence, since $S_{\Sigma}=0$, we have,
\begin{align}\label{eq:aux.6}
S=-|A|^2-H^2\le-\frac{H^2}{n-1}-H^2=-n(n-1),
\end{align}
which establishes the inequality part in (c). If equality holds, then $|A(t)|^2=n-1$, which, together with $H(t)=n-1$, implies that each $\Sigma_t$ is umbilic; in fact, $A(t)=g_t$. Using this in \eqref{eq:metric.2} easily implies the isometry part in (c).

Since $-(K+g)$ is $(n-1)$-convex, it is not difficult to see that $\tr K\le -n$. In fact, if $\{e_1,\ldots,e_n\}$ is an orthonormal basis for $T_pM$, $p\in M$, then
\begin{align}\label{eq:aux.5}
(n-1)\tr K=\sum_{i=1}^n\sum_{j\neq i}K(e_j,e_j)\le-\sum_{i=1}^n(n-1)=-n(n-1),
\end{align}
that is, $\tr K\le-n$. If $\tr K=-n$, it follows from \eqref{eq:aux.5} that
\begin{align*}
\sum_{j\neq i}K(e_j,e_j)=-(n-1)\quad\mbox{for each}\quad i=1,\ldots,n.
\end{align*}
Therefore,
\begin{align*}
-n=\tr K=K(e_i,e_i)+\sum_{j\neq i}K(e_j,e_j)=K(e_i,e_i)-(n-1),
\end{align*}
that is, $K(e_i,e_i)=-1$ for each $i=1,\ldots,n$. Since $\{e_1,\ldots,e_n\}$ is arbitrary, we have $K=-g$. Thus, using that $A(t)=-K|_{\Sigma_t}=g|_{\Sigma_t}$ in \eqref{eq:aux.6}, we obtain
\begin{align*}
S=-|A(t)|^2-|H(t)|^2=-n(n-1).
\end{align*}
Finally, 
\begin{align*}
\mu=\frac{1}{2}(S-|K|^2+(\tr K)^2)=0\quad\mbox{and}\quad J=\div(K-(\tr K)g)=0.
\end{align*}
\end{proof}

\section{Applications: closed cases}\label{sec:closed.cases}

In this section, we wish to apply the results of the previous section to initial data manifolds that are closed (compact without boundary). These results naturally relate to cosmological (\sl{i.e.} spatially closed) spacetimes. We'll illustrate the results with various examples.

\subsection{The spherical case}\label{sphericalcase}

In this section, we want to apply Theorem~\ref{thm:main.boundary} to the case that $M$ is closed.

Let $M$ be an $n$-dimensional closed manifold. Suppose the $(n-1)$-th homology group $H_{n-1}(M)$ is nontrivial. Any nontrivial element of $H_{n-1}(M)$ gives rise to a smooth closed embedded non-separating orientable hypersurface $\Sigma\subset M$. In particular, $\Sigma$ is \sl{two-sided} in $M$, \sl{i.e.} there is an embedding $F:[-1,1]\times\Sigma\to M$ such that $F(0,p)=p$ for each $p\in\Sigma$. Let $U$ denote the open set $F((-1,1)\times\Sigma)\subset M$. We say that $M$ is \sl{retractable with respect to $\Sigma$} if $M\setminus U$ retracts onto some component of $\p U$. If we consider a Riemannian metric $g$ on $M$, given a unit normal field $\nu$ on $\Sigma$ with respect to $g$, we say that $M$ is \sl{retractable with respect to $\Sigma$ towards $\nu$} if $M\setminus U$ retracts onto the component of $\p U$ towards which $\nu$ points. 

An obvious situation where this occurs is when $M$ is of the form $M=S^1\times Q$, with $Q$ closed. Then $M$ is retractable with respect to $\Sigma=\{x\}\times Q$, $x\in S^1$. Another situation of interest is when $M$ is of the form $M=T^n\,\sharp\,Q$. View $T^n$ as an $n$-dimensional cube with opposite boundary faces identified. To obtain $M$, we may assume the connected sum takes place in a bounded open set $U$ inside the cube. Let $\Sigma$ be an $(n-1)$-torus parallel to one of the faces away from the set $U$. Then $M$ is retractable with respect to $\Sigma$. More generally, if $M$ is retractable with respect to $\Sigma$, then so is $M\,\sharp\,Q$, with $Q$ closed, provided the connect sum takes place away from $\Sigma$.

\begin{thm}\label{thm:cor.boundary}
Let $(M,g,K)$ be a $3$-dimensional closed initial data set satisfying the energy condition $\mu-|J|\ge c$ for some constant $c>0$. Suppose that $(M,g,K)$ admits a MOTS $\Sigma$, with respect to a unit normal field $\nu$, such that the following conditions hold:
\begin{enumerate}
\item[(I)] $M$ is retractable with respect to $\Sigma$ towards $\nu$,
\item[(II)] the homology group $H_2(M)$ is generated by the class of $\Sigma$, and
\item[(III)] $\Sigma$ minimizes area.
\end{enumerate}
Then $\Sigma$ is topologically $S^2$ and its area satisfies,
\begin{align}\label{ineqA}
\A(\Sigma)\le\frac{4\pi}{c}.
\end{align}
Moreover, if $\A(\Sigma)=4\pi/c$, then the following hold:
\begin{enumerate}
\item[(a')] $(M,g)$ is isometric to $[0,\ell]\times\Sigma/{\sim}$ endowed with the induced metric from the product $([0,\ell]\times\Sigma,dt^2+h)$, where `\,$\sim$' means that $\{0\}\times\Sigma$ and $\{\ell\}\times\Sigma$ are suitably identified and $h$ - the induced metric on $\Sigma$ - has constant Gaussian curvature $\kappa_\Sigma=c$,
\item[(b')] $K=a\,dt^2$ on $M$, where $a\in C^\infty(M)$ depends only on $t$, and
\item[(c')] $\mu=c$ and $J=0$ on $M$.
\end{enumerate}
\end{thm}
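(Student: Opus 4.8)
The strategy is to cut $M$ open along $\Sigma$ to reduce to the compact-with-boundary situation of Theorem~\ref{thm:main.boundary}. Since $\Sigma$ is two-sided and non-separating, there is an embedding $F:[-1,1]\times\Sigma\to M$ with $F(0,\cdot)=\id_\Sigma$; let $U=F((-1,1)\times\Sigma)$. Pushing $\Sigma$ slightly to the side into which $\nu$ points produces a surface $\Sigma'=F(\{s_0\}\times\Sigma)$, $s_0>0$ small, and cutting $M$ along $\Sigma'$ yields a compact manifold $\widehat M$ with two boundary components, each a copy of $\Sigma'$; one copy, call it $\widehat\Sigma_0$, carries the normal pointing into $\widehat M$ in the $-\nu$ direction, and the other, call it $S$, carries the normal pointing out of $\widehat M$ in the $+\nu$ direction. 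The point of the push-off is that on the original MOTS $\theta^+=0$, so on the two copies of the boundary of the cut-open manifold we have $\theta^+\le 0$ on $\widehat\Sigma_0$ and $\theta^+\ge 0$ on $S$ only up to a small error; to get this cleanly I would instead cut directly along $\Sigma$ (so $\widehat\Sigma_0$ and $S$ are literally copies of $\Sigma$ with the induced data), observing that a MOTS with respect to $\nu$ has $\theta^+=0$ measured from either side after the appropriate relabeling of the outward normal — concretely, $\theta^+\le 0$ on $\widehat\Sigma_0$ with respect to the normal pointing into $\widehat M$ and $\theta^+\ge 0$ on $S$ with respect to the normal pointing out of $\widehat M$ both hold as equalities. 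This verifies hypotheses (1) and (2) of Theorem~\ref{thm:main.boundary}.

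Next I would check the remaining hypotheses. Hypothesis (3), the homotopy condition of $\widehat M$ with respect to $\widehat\Sigma_0$, follows from the retractability assumption (I): the retraction of $M\setminus U$ onto the component of $\partial U$ towards which $\nu$ points descends to a retraction of $\widehat M$ onto $\widehat\Sigma_0$ (after gluing in the collar $F([0,s_0]\times\Sigma)$, which deformation retracts onto $\widehat\Sigma_0$), and a retraction in particular satisfies the homotopy condition. Hypothesis (4), $H_2(\widehat M,\widehat\Sigma_0)=0$: this should follow from (II) together with the long exact sequence of the pair and a Mayer–Vietoris argument relating $H_*(M)$ to $H_*(\widehat M)$ under cutting along a non-separating hypersurface whose class generates $H_2(M)$; the idea is that every closed surface in $\widehat M$ is, up to homology, a multiple of $\widehat\Sigma_0$, so relative cycles bound. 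Hypothesis (5), that $\widehat\Sigma_0$ minimizes area in $\widehat M$: this follows from (III), since any competitor surface in $\widehat M$ homologous to $\widehat\Sigma_0$ maps to a surface in $M$ homologous to $\Sigma$, and $\Sigma$ is area-minimizing there; one must be slightly careful that the homology class is preserved under the quotient map, which again uses that cutting is along $\Sigma$ itself.

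With all five hypotheses of Theorem~\ref{thm:main.boundary} in place, that theorem gives at once that $\widehat\Sigma_0\cong S^2$, hence $\Sigma\cong S^2$, and $\A(\Sigma)=\A(\widehat\Sigma_0)\le 4\pi/c$, which is \eqref{ineqA}. In the equality case $\A(\Sigma)=4\pi/c$, Theorem~\ref{thm:main.boundary} gives that $(\widehat M,g)$ is isometric to $([0,\ell]\times\Sigma_0, dt^2+g_0)$ with $\kappa_{\Sigma_0}=c$, $K=a\,dt^2$ with $a=a(t)$, and $\mu=c$, $J=0$. Re-gluing the two boundary components $\{0\}\times\Sigma_0$ and $\{\ell\}\times\Sigma_0$ via the identification induced by the cut recovers $(M,g)$ as the quotient $[0,\ell]\times\Sigma/{\sim}$ described in (a'); since $K$ and the metric are already in product form on $\widehat M$ and the gluing identifies isometric round spheres, conclusions (b') and (c') descend immediately. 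I would note that the identification `$\sim$' is by an isometry of $(\Sigma,h)$ (possibly orientation-reversing), so the quotient is a mapping torus of a round $2$-sphere, which is consistent with the global topology forced by $\Sigma$ being non-separating.

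**Main obstacle.** The delicate point is purely topological: verifying $H_2(\widehat M,\widehat\Sigma_0)=0$ from the hypothesis that $[\Sigma]$ generates $H_2(M)$, and correctly tracking homology classes through the cutting and re-gluing operation (including the possibility that $\widehat M$ could a priori be disconnected or that $\Sigma$ could represent a torsion-free class of infinite order requiring care with coefficients). The geometric/analytic content is entirely inherited from Theorem~\ref{thm:main.boundary}; the work is in setting up the cut-and-paste correspondence so that hypotheses (1)–(5) transfer cleanly and the equality-case rigidity descends to the quotient.
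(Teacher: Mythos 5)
Your proposal takes essentially the same route as the paper's proof: cut $M$ along $\Sigma$ to obtain a compact-with-boundary initial data set, verify hypotheses (1)--(5) of Theorem~\ref{thm:main.boundary} (with (3) coming from retractability towards $\nu$, (4) from $H_2(M)$ being generated by $[\Sigma]$, and (5) from the area-minimizing property), and then transfer the equality-case rigidity back to the re-glued quotient. Apart from the momentary mislabeling of the two boundary copies in your initial push-off paragraph (which you then discard in favor of cutting along $\Sigma$ itself, exactly as the paper does), the argument matches the paper's.
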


\begin{proof}
First, observe that, by making a `cut' along $\Sigma$, we obtain a $3$-dimensional compact manifold $M'$ with two boundary components, say $\Sigma_0$ and $S$. Also, the initial data $(g,K)$ on $M$ gives rise to data $(g',K')$ on $M'$ in the natural way. The boundary components $\Sigma_0$ and $S$ are both isometric to $\Sigma$ with respect to the corresponding induced metrics.

Now, consider the initial data set $(M',g',K')$. Observe that the boundary components $\Sigma_0$ and $S$ of $M'$ can be chosen in such a way that conditions (1)-(5) of Theorem~\ref{thm:main.boundary} are satisfied. In fact,
\begin{enumerate}
\item $\Sigma_0$ is a MOTS with respect to the normal that points into $M'$,
\item $S$ is a MOTS with respect to the normal that points out of $M'$,
\item $M'$ satisfies the homotopy condition with respect to $\Sigma_0$, since $M$ is retractable with respect to $\Sigma$ towards $\nu$,
\item the relative homology group $H_2(M',\Sigma_0)$ vanishes, since $H_2(M)$ is generated by the class of $\Sigma$, and
\item $\Sigma_0$ minimizes area in $(M',g')$ as $\Sigma$ minimizes area in $(M,g)$. 
\end{enumerate}
Conditions (1) and (2) above follow from the fact of $\Sigma$ being a MOTS in $(M,g,K)$ with respect to $\nu$ and the choice of $\Sigma_0$ and $S$. Therefore, by Theorem~\ref{thm:main.boundary}, $\Sigma_0$ is topologically $S^2$ and its area satisfies $\A(\Sigma_0)\le4\pi/c$. The same conclusions hold for~$\Sigma$. Moreover, if $\A(\Sigma)=4\pi/c$, that is, $\A(\Sigma_0)=4\pi/c$, then conclusions (a)-(c) of Theorem~\ref{thm:main.boundary} hold for $(M',g',K')$ and thus $(M,g,K)$ satisfies (a')-(c').
\end{proof}

\begin{rmk} 
Initial data sets satisfying the assumptions of Theorem~\ref{thm:cor.boundary} arise naturally in the Nariai spacetime. The Nariai spacetime is a solution to the vacuum Einstein equations with positive cosmological constant, $\Lambda>0$. It is a metric product of $2$-dimensional de Sitter space $dS_2$ and $S^2$,
\begin{align*}
\bar N=(\bbR\times S^1)\times S^2,\quad\bar h=-dt^2+a^2\cosh^2(t/a)\,d\chi^2+a^2d\Omega^2,
\end{align*}
where $a=\frac{1}{\sqrt{\Lambda}}$. As described in \cite{Bousso, BousHawk}, the Nariai spacetime is an interesting limit of Schwarzschild-de Sitter space, as the size of the black hole increases and its area approaches the upper bound in \eqref{ineqA}, with $c=\Lambda$. 

Under the transformation, $\cosh(t/a)=\sec\tau$, the metric $\bar h$ becomes,
\begin{align*}
\bar h=\frac{a^2}{\cos^2(\tau)}\left(-d\tau^2+d\chi^2\right)+a^2d\Omega^2, 
\end{align*}
where $\tau$ is in the range, $-\frac{\pi}{2}<\tau<\frac{\pi}{2}$. With this change of time coordinate, we see that $dS_2$ is locally conformal to the Minkowski plane. A Penrose type diagram for $(\bar N,\bar h)$ is depicted in Figure~\ref{dsfig1}. Each point in the diagram represents a round $2$-sphere of radius $a$. In the diagram, $M=\Gamma\times S^2$, where $\Gamma$ is a smooth spacelike graph over the circle: $\tau=0$, $0\le\chi\le2\pi$ in $dS_2$. Taking $\Sigma$ to be the $2$-sphere intersection of $M$ with the totally geodesic null hypersurface $H$, one easily verifies that $(M,g,K)$, where $g$ is the induced metric and $K$ is the second fundamental form of $M$, respectively, satisfies the assumptions of Theorem~\ref{thm:cor.boundary}, with equality in \eqref{ineqA}. We note that there are initial data sets in (spatially closed) Schwarzschild-de Sitter that satisfy all the assumptions of Theorem~\ref{thm:cor.boundary}, except for equality in \eqref{ineqA}.
\end{rmk}

\begin{figure}[ht]
\begin{center}
\mbox{
\includegraphics[width=3.2in]{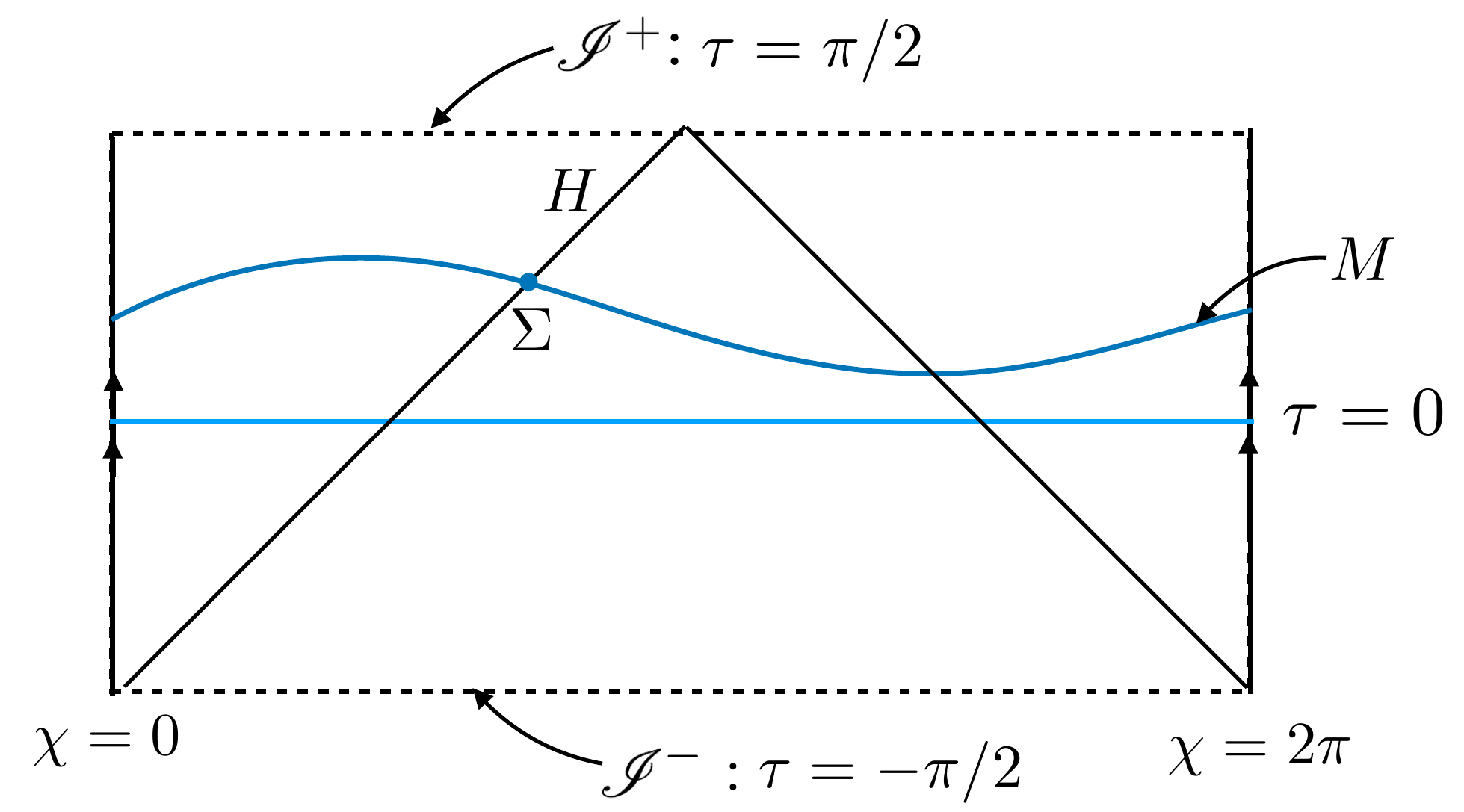}
}
\end{center}
\caption{Nariai spacetime.}
\label{dsfig1}
\end{figure}

\subsection{The non-spherical case}\label{sec:nonsphere}

We now consider applications of Theorems \ref{thm:EicGalMen}, \ref{thm:Brane.1} and \ref{thm:Brane.2} to closed initial data sets satisfying the DEC. As a consequence, we obtain results concerning the existence and rigidity of MOTS with nontrivial (\sl{e.g.} toroidal) topology in the cosmological setting. 

We first consider an example. Let $(\bar N,\bar h)$ be the FLRW spacetime,
\begin{align*}
\bar N=(0,a)\times M,\quad\bar h=-dt^2+g_t,
\end{align*}
where $g_t=G^2(t)\,d\Omega^2$ and $(M,d\Omega^2)$ is the unit $3$-sphere. For each $t\in(0,a)$, consider the initial data $(M_t=\{t\}\times M,g_t,K_t)$, where $K_t$, the second fundamental form, is given by 
\begin{align*}
K_t=\frac{\dot{G}(t)}{G(t)}\,g_t.
\end{align*}
In particular, either $K_t$ or $-K_t$ is $2$-convex, depending on the sign of $\dot{G}(t)$. One easily verifies that the DEC holds (strictly) for any choice of scale factor $G(t)$.

For each $t\in(0,a)$, it is easy to see that $(M_t,g_t,K_t)$ contains a spherical MOTS. Indeed, the latitudinal $2$-spheres take on all mean curvature values between $-\infty$ and $+\infty$. Choose the latitudinal $2$-sphere $\Sigma_t$ such that its mean curvature satisfies 
\begin{align}\label{motscondition}
H_t=-\tr_{\Sigma_t}K_t=-2\frac{\dot{G}(t)}{G(t)}.
\end{align}
Then, by \eqref{eq:MOTS.def}, $\Sigma_t$ is a MOTS, $\theta^+_t=0$. 

In fact, it is also the case that $(M_t,g_t,K_t)$ contains a toroidal MOTS. Here, we rely on the one-parameter family of Clifford tori $T_r$ in the unit $3$-sphere $S^3$. By identifying $S^3$ with the unit sphere centered at the origin in $\R^4$, $T_r$, $0<r<1$, is defined as
\begin{align*}
T_r=\left\{(x,y,u,v)\in S^3:x^2+y^2=r^2,\,u^2+v^2=1-r^2\right\}.
\end{align*}
The `standard' Clifford torus is obtained by setting $r = \frac1{\sqrt{2}}$. An elementary computation shows that each $T_r$ has constant mean curvature (see \cite{Kit}),
\begin{align*}
H_r=\frac{1-2r^2}{r\sqrt{1-r^2}}.
\end{align*}
In particular, the Clifford tori take on all mean curvature values between $-\infty$ and~$+\infty$. Thus, arguing as above in the sphere case, there exists an embedded torus $\Sigma_t$ in $(M_t,g_t,K_t)$ satisfying \eqref{motscondition}, which hence is a MOTS. 

One can modify the initial data set $(M_t,g_t,K_t)$ by adding a handle from one side of the torus $\Sigma_t$ to the other, \sl{à la} Gromov-Lawson \cite{GL}, so that $\Sigma_t$ is no longer homologically trivial, and such that the DEC still holds. However, the resulting initial data manifold won't be retractable with respect to $\Sigma_t$, as follows from the next theorem.

\begin{thm}\label{thm:cor.boundary.2}
Let $(M,g,K)$ be an $n$-dimensional, $3\le n\le 7$, closed initial data set satisfying the DEC, $\mu\ge|J|$. Suppose that $(M,g,K)$ admits a MOTS $\Sigma$, with respect to a unit normal field $\nu$, such that the following conditions hold:
\begin{enumerate}
\item[(I)] $M$ is retractable with respect to $\Sigma$ towards $\nu$ and
\item[(II)] $\Sigma$ satisfies the cohomology condition.
\end{enumerate}
Then $\chi^+=0$ on $\Sigma$ and $\Sigma$ is a flat $(n-1)$-torus with respect to the induced metric. Moreover, the following hold:
\begin{enumerate}
\item[(a')] $M\setminus\Sigma\cong(0,\ell)\times\Sigma$ for some $\ell>0$.
\item[] Let $\Sigma_t\cong\{t\}\times\Sigma$ with unit normal $\nu_t$ in direction of the foliation.
\item[(b')] $\chi^+=0$ on $\Sigma_t$ for every $t\in(0,\ell)$.
\item[(c')] $\Sigma_t$ is a flat $(n-1)$-torus with respect to the induced metric for every $t\in(0,\ell)$.
\item[(d')] $\mu+J(\nu_t)=0$ on $\Sigma_t$ for every $t\in(0,\ell)$. In particular, $\mu=|J|$ on $M$.
\end{enumerate}
If we assume further that $K$ is $(n-1)$-convex, we also have:
\begin{enumerate}
\item[(e')] $(M,g)$ is isometric to $[0,\ell]\times\Sigma/{\sim}$ endowed with the induced metric from the product $([0,\ell]\times\Sigma,dt^2+h)$, where $h$ is the induced metric on $\Sigma$. In particular, $(M,g)$ is flat.
\item[(f')] $K=a\,dt^2$, where $a\in C^\infty(M)$ depends only on $t$.
\item[(g')] $\mu=0$ and $J=0$ on $M$.
\end{enumerate}
\end{thm}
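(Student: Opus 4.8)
The plan is to reduce Theorem~\ref{thm:cor.boundary.2} to the compact-with-boundary results of Section~\ref{sec:boundary.cases} by the same ``cut along $\Sigma$'' device used in the proof of Theorem~\ref{thm:cor.boundary}. Concretely, first I would use the two-sidedness of $\Sigma$ to cut $M$ open along $\Sigma$, producing a compact manifold $M'$ with $\p M'=\Sigma_0\cup S$, where $\Sigma_0$ and $S$ are two isometric copies of $\Sigma$, and the pushed-forward data $(g',K')$. Because $\Sigma$ is a MOTS in $(M,g,K)$ with respect to $\nu$, and $M$ is retractable with respect to $\Sigma$ towards $\nu$, the copies can be labelled so that: $\theta^+\le0$ (in fact $=0$) on $\Sigma_0$ with respect to the inward normal; $\theta^+\ge0$ (in fact $=0$) on $S$ with respect to the outward normal; $M'$ satisfies the homotopy condition with respect to $\Sigma_0$; and $\Sigma_0$ satisfies the cohomology condition because $\Sigma$ does. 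Thus $(M',g',K')$ satisfies all the hypotheses of Theorem~\ref{thm:EicGalMen}.

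Applying Theorem~\ref{thm:EicGalMen} to $(M',g',K')$ immediately yields $M'\cong[0,\ell]\times\Sigma_0$ together with conclusions (b)--(d) of that theorem: $\chi^+=0$ on every leaf $\Sigma_t$, each $\Sigma_t$ is a flat $(n-1)$-torus, and $\mu+J(\nu_t)=0$ on $\Sigma_t$. Translating back through the gluing $\{0\}\times\Sigma_0$ and $\{\ell\}\times\Sigma_0$ back to $\Sigma$, the original MOTS $\Sigma$ is identified with the two boundary leaves (hence with $\Sigma_0$), so $\chi^+=0$ on $\Sigma$ and $\Sigma$ is a flat $(n-1)$-torus; moreover $M\setminus\Sigma$ is identified with the open cylinder $(0,\ell)\times\Sigma$, giving (a')--(d'). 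One routine point to check is that the claimed identification is consistent along the cut, i.e. that the induced metric and the data agree on the two copies — but this is exactly how the cut was defined, so it is automatic.

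For the last block (e')--(g'), I would add the hypothesis that $K$ is $(n-1)$-convex and invoke Theorem~\ref{thm:Brane.1} with $\epsilon=0$. On $(M',g',K')$ the functional $\B_0$ is just the area functional, and since $\Sigma_0$ and $S$ are isometric copies of $\Sigma$ we have $\A(\Sigma_0)=\A(S)$, so hypothesis (ii) of Theorem~\ref{thm:Brane.1} holds with equality. Theorem~\ref{thm:Brane.1} then gives that $(M',g')$ is isometric to $([0,\ell]\times\Sigma_0,dt^2+g_0)$ (the $\epsilon=0$ case of conclusion (b)), that $K'=a\,dt^2$ with $a$ depending only on $t$, and that $\mu=J=0$. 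Re-gluing produces the quotient description in (e'), the form of $K$ in (f'), and (g'); in particular $(M,g)$ is flat since each leaf is flat and the metric is a product.

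The main obstacle is bookkeeping rather than analysis: one must verify carefully that the orientation/normal conventions make the two boundary pieces play the roles of ``$\Sigma_0$'' and ``$S$'' in the hypotheses of Theorems~\ref{thm:EicGalMen} and~\ref{thm:Brane.1} — i.e. that retractability ``towards $\nu$'' translates into the homotopy condition being imposed on the correct component, and that the MOTS condition $\theta^+=0$ on $\Sigma$ yields the required inequalities on both $\Sigma_0$ and $S$ after the cut (here one uses that the inward normal on one copy is the outward normal on the other). A secondary point worth a sentence is why, in (e'), the identification ``$\sim$'' in the quotient $[0,\ell]\times\Sigma/{\sim}$ is by an isometry of $(\Sigma,h)$: this follows because both ends of the product collar are isometrically the original $\Sigma$, and the gluing map is the deck-type identification induced by the cut. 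Once these conventions are pinned down, everything else is a direct citation of the section's theorems.
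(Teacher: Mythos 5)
Your proposal is correct and follows essentially the same route as the paper: cut $M$ open along $\Sigma$ to get a compact-with-boundary data set $(M',g',K')$, check that conditions (1)--(4) of Theorem~\ref{thm:EicGalMen} hold (with the retractability towards $\nu$ supplying the homotopy condition on the correct copy of $\Sigma$), which gives (a')--(d'), and then, under the $(n-1)$-convexity of $K$, invoke Theorem~\ref{thm:Brane.1} with $\epsilon=0$ using $\A(\Sigma_0)=\A(S)$ to obtain (e')--(g'). Your extra remarks on normal conventions and on the gluing being by an isometry of $(\Sigma,h)$ are exactly the routine bookkeeping the paper leaves implicit.
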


\begin{proof}
As in the proof of Theorem~\ref{thm:cor.boundary}, let $(M',g',K')$ be the initial data set derived from $(M,g,K)$ - by making a `cut' along $\Sigma$ - with two boundary components, $\Sigma_0$ and $S$, both isometric to $\Sigma$, such that $\Sigma_0$ is a MOTS with respect to the normal that points into $M'$ and $S$ is a MOTS with respect to the normal that points out of~$M'$. 

It is not difficult to see that $(M',g',K')$ satisfies all the assumptions of Theorem~\ref{thm:EicGalMen} and then all its conclusions. Thus $\Sigma$ is a flat $(n-1)$-torus with $\chi^+=0$ on it and conclusions (a')-(d') of the theorem hold.

If $K$ is $(n-1)$-convex, since $\A(\Sigma_0)=\A(S)$, it follows that $(M',g',K')$ satisfies all the hypotheses of Theorem~\ref{thm:Brane.1} for $\epsilon=0$. Conclusions (e')-(g') then follow.
\end{proof}

\begin{rmk}
It follows, for example, that in a $4$-dimensional spacetime which satisfies the DEC {\it strictly} and which has toroidal Cauchy surfaces, there cannot be any homologically nontrivial toroidal MOTS in any Cauchy surface. This applies, in particular, to the time slices in the toroidal ($k=0$) FLRW spacetimes, that satisfy the Einstein equations with dust (zero-pressure perfect fluid) source.

In view of property (g'), to find initial data sets satisfying the assumptions of Theorem~\ref{thm:cor.boundary.2}, one should perhaps consider vacuum spacetimes. A well-known class of examples are the toroidal Kasner spacetimes,
\begin{align*}
\bar N=(0,\infty)\times M,\quad\bar h=-dt^2+t^{2p_1}dx^2+t^{2p_2}dy^2+t^{2p_3}dz^2,
\end{align*}
where $x,y,z$ are to be understood as periodic coordinates, and where $p_1\le p_2\le p_3$ must satisfy,
\begin{align*}
p_1+p_2+p_3=1\quad\text{and}\quad p_1^2+p_2^2+p_3^2=1.
\end{align*}
Let $M_0$ be the $t=1$ time slice, with metric $g$ and second fundamental form $K$ induced from $(\bar N,\bar h)$. It is not hard to show that in order for $K$ to be $2$-convex, one must have, $p_1=p_2=0$ and $p_3=1$, so that $\bar h$ becomes,
\begin{align*}
\bar h=-dt^2+dx^2+dy^2+t^2dz^2.
\end{align*}
This is an exceptional Kasner spacetime, known as `flat Kasner'. It is locally isometric to Minkowski space. Taking $\Sigma$ to be the torus $t=1$, $z=z_0$, we see that $M_0$ satisfies the assumptions of Theorem~\ref{thm:cor.boundary.2}, including the $2$-convexity assumption.

We mention one further example which illustrates a certain flexibility in initial data sets satisfying (I) and (II), but not the convexity condition. It's a small modification of Example 4.2 in \cite{EicGalMen}. 

Let $\mathbb{R}^3_1$ be Minkowski space with standard coordinates $t,x,y,z$. Consider the box $\mathcal{B}=\{(x,y,z):0\le x\le 1,0\le y\le 1,0\le z\le 1\}$ in the $t=0$ slice. Let $f:\mathcal{B}\to\R$ be a smooth function that vanishes near the boundary of $\mathcal{B}$ and whose graph is spacelike in $\mathbb{R}^3_1$. By identifying opposite sides of the box, we obtain an initial data set $(M,g,K)$ with $M\cong T^3$, where $M$ is given by the graph of $f$, and where $g$ and $K$ are induced from the graph of $f$. Let $\Sigma$ be the intersection of $M$ with the null hyperplane $t=z-\frac12$; see Figure~\ref{dsfig2}. Because the null hyperplane is totally geodesic, $\Sigma$ is necessarily a MOTS. It follows that $(M,g,K)$ satisfies (I) and (II) with respect to $\Sigma$. Note also that $(M,g,K)$ satisfies the DEC; in fact, because it essentially sits in Minkowski space, it is a vacuum initial data set, $\mu=0$, $J=0$. Hence, $(M,g,K)$ satisfies all the assumptions of Theorem~\ref{thm:cor.boundary.2}, except, in general, the convexity condition on $K$. The foliation by MOTS guaranteed by properties (a')-(d') comes from intersecting $M$ with the null hyperplanes $t=z+c$. That these properties hold may be understood in terms of special features of totally geodesic null hypersurfaces. 
\end{rmk}

\begin{figure}[ht]
\begin{center}
\mbox{
\includegraphics[width=4in]{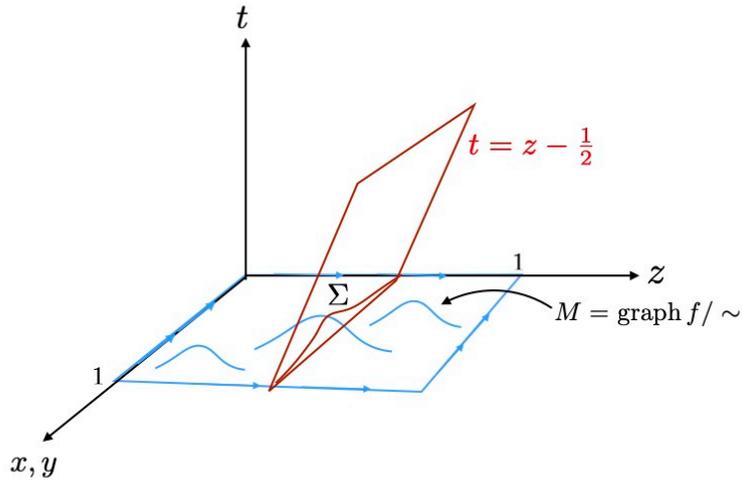}
}
\end{center}
\caption{Initial data set satisfying (I) and (II) of Theorem~\ref{thm:cor.boundary.2}.}
\label{dsfig2}
\end{figure}

\begin{rmk}\label{rem:LohPMT}
Finally, we mention a connection to the spacetime positive mass theorem, specifically the approach taken by Lohkamp~\cite{Lohkamp2016}, from a perspective slightly different from the discussion in \cite{EicGalMen}. Lohkamp reduces the proof to a stand alone result, namely the nonexistence of `$\mu-|J|>0$ islands', see \cite[Theorem~2]{Lohkamp2016}. By a standard compactification (which Lohkamp also considers), the setting of Theorem~2 immediately gives an initial data set satisfying the DEC, with initial data manifold $M\cong T^n\,\sharp\,Q$, $Q$ closed, and a toroidal MOTS $\Sigma$, such that $M$ is retractable with respect to $\Sigma$ (see the discussion at the beginning of Section \ref{sphericalcase}). Theorem~\ref{thm:cor.boundary.2} then yields that $\mu=|J|$ (among other things), which implies Lohkamp's no $\mu-|J|>0$ islands result in dimensions $3\le n\le 7$.
\end{rmk}

Lastly, we consider the following consequence of Theorem~\ref{thm:Brane.2}.

\begin{cor}
Let $(M,g,K)$ be an $n$-dimensional, $3\le n\le 7$, closed initial data set satisfying the DEC, $\mu\ge|J|$. Assume that $-(K+g)$ is $(n-1)$-convex. Then $(M,g,K)$ cannot satisfy conditions {\rm (I)-(II)} of Theorem~\ref{thm:cor.boundary.2}.
\end{cor}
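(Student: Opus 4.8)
The plan is to argue by contradiction, reducing the closed case to the compact-with-boundary setting so that Theorem~\ref{thm:Brane.2} applies. Suppose $(M,g,K)$ did satisfy conditions (I)--(II) of Theorem~\ref{thm:cor.boundary.2}. First I would run the same `cut' construction used in the proof of that theorem: cutting $M$ along $\Sigma$ produces a compact-with-boundary initial data set $(M',g',K')$ whose boundary is a disjoint union $\Sigma_0\sqcup S$ of two isometric copies of $\Sigma$, with $\Sigma_0$ a MOTS with respect to the normal pointing into $M'$ and $S$ a MOTS with respect to the normal pointing out of $M'$. As recorded in that proof, retractability of $M$ with respect to $\Sigma$ towards $\nu$ supplies the homotopy condition for $M'$ with respect to $\Sigma_0$, condition (II) supplies the cohomology condition for $\Sigma_0$, and the DEC descends to $(M',g',K')$; hence $(M',g',K')$ is exactly of the type required in Theorem~\ref{thm:EicGalMen}. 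Moreover, $(n-1)$-convexity is a pointwise condition and $(M',g',K')$ is locally isometric to $(M,g,K)$, so $-(K'+g')$ is $(n-1)$-convex on $M'$ because $-(K+g)$ is on $M$.

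Thus $(M',g',K')$ satisfies all the hypotheses of Theorem~\ref{thm:Brane.2}, and that theorem yields
\[
\B(\Sigma_0)\le\B(S),
\]
where $\B=\B_1$, so $\B(\,\cdot\,)=\A(\,\cdot\,)-(n-1)\,\V(\,\cdot\,)$ and $\V(\,\cdot\,)$ denotes the volume of the region bounded by the indicated surface and $\Sigma_0$. I would then evaluate both sides directly. The region bounded by $\Sigma_0$ and itself is empty, so $\V(\Sigma_0)=0$ and $\B(\Sigma_0)=\A(\Sigma_0)=\A(\Sigma)$; the region bounded by $S$ and $\Sigma_0$ is all of $M'$, so $\V(S)$ is the (positive) total volume of $(M',g')$, and since $\A(S)=\A(\Sigma)$ as well we get $\B(S)=\A(\Sigma)-(n-1)\,\V(S)<\A(\Sigma)=\B(\Sigma_0)$. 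This contradicts the displayed inequality, and the corollary follows.

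There is essentially no hard step here: all the substance sits in Theorems~\ref{thm:EicGalMen} and~\ref{thm:Brane.2}, and what remains is bookkeeping. The only points I would check carefully are that the cut manifold $(M',g',K')$ really does satisfy the hypotheses of Theorem~\ref{thm:EicGalMen} with the correct orientations of the boundary normals --- which is already carried out in the proof of Theorem~\ref{thm:cor.boundary.2} --- and that passing to the cut manifold cannot affect the pointwise convexity of $-(K+g)$. I expect this (rather mild) bookkeeping to be the only obstacle.
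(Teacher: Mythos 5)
Your proposal is correct and follows essentially the same route as the paper: cut along $\Sigma$, verify that the resulting $(M',g',K')$ satisfies the hypotheses of Theorem~\ref{thm:Brane.2} (with the convexity of $-(K'+g')$ inherited pointwise), and contradict $\B(\Sigma_0)\le\B(S)$ via $\B(\Sigma_0)=\A(\Sigma_0)=\A(S)>\B(S)$. No issues.
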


Examples like those discussed at the beginning of Section~\ref{sec:nonsphere} show that, while the conditions (I) and (II) can't be simultaneously satisfied, either one can be.

\begin{proof}
Let $(M',g',K')$ be as in the proof of Theorem~\ref{thm:cor.boundary.2}. If $-(K+g)$ is $(n-1)$-convex, in particular, $-(K'+g')$ is $(n-1)$-convex, it follows from the first part of Theorem~\ref{thm:Brane.2} that $\B(\Sigma_0)\le\B(S)$, which is a contradiction, because
\begin{align*}
\B(\Sigma_0)=\A(\Sigma_0)=\A(S)>\B(S).
\end{align*}
\end{proof}

\bibliographystyle{amsplain}
\bibliography{bibliography}

\end{document}